\documentclass[preprint,12pt]{elsarticle}
\usepackage{amssymb}
\usepackage{amsmath}
\usepackage[ruled,vlined,linesnumbered]{algorithm2e}
\usepackage{graphicx}
\usepackage{booktabs}
\usepackage[hidelinks]{hyperref}
\usepackage{multirow}
\usepackage{pifont}
\usepackage{float}
\usepackage[section]{placeins}
\usepackage{bm}
\usepackage{amsthm}
\usepackage{makecell}
\newcommand{\DC}{\mathcal{F}_{\mathrm{DC}}}
\newcommand{\norm}[1]{\left\|#1\right\|}
\newcommand{\pA}{p_A}
\newcommand{\Phiinv}{\Phi^{-1}}
\DeclareMathOperator*{\argmax}{arg\,max}
\newtheorem{theorem}{Theorem}
\newtheorem{definition}{Definition}

\theoremstyle{remark}
\newtheorem{remark}{Remark}

\journal{Expert Systems with Applications}

\begin{document}

\begin{frontmatter}

\title{Traffic-Aware Randomized Smoothing for LLM-Based Network Intrusion Detection}

\author[1]{Zhenpeng Li}

\affiliation{organization={Guangzhou Health Science College},
            addressline={GuangYuanZhong Road, 248}, 
            city={Guangzhou},
            postcode={510405}, 
            state={Guangdong Province},
            country={China}}

\begin{abstract}
Large language model (LLM)-based intrusion detection systems (IDS)
are increasingly studied for security monitoring, yet their robustness
against feasible traffic manipulation remains largely empirical.
We present Traffic-Aware Randomized Smoothing (TA-RS), a
classifier-agnostic certified defense that injects Gaussian noise
exclusively into the directly controllable (DC)
subspace---features a remote attacker can modify---during both
fine-tuning and certification, aligning the smoothing distribution
with the attacker-controllable subspace.
We identify a critical prerequisite: applying standard randomized
smoothing to clean-trained LLM-IDS yields weak certified accuracy in
three of four (model, dataset) pairs tested (14--33\%, at or below
random) and only 57\% in the fourth
(43~pp below the noise-augmented result); noise-augmented
fine-tuning recovers to 68--100\% on two of three
benchmark datasets (at $\sigma{=}0.25$).
At the L$_\infty$-equivalent threshold $R_\infty{=}\varepsilon\sqrt{|\DC|}$
($\varepsilon{=}0.05$), TA-RS achieves 55--100\% certified accuracy on
CIC-IDS-2018 and HIKARI-2021, with median certified radii
($\tilde{R}{\approx}0.45$--$0.96$) exceeding $R_\infty$ by
$1.8$--$5{\times}$ (across $\sigma{=}0.25$--$1.00$).
Against a fairly trained iso-trained RS baseline the residual
advantage is dataset-dependent (4--19~pp on CIC-IDS-2018).
The larger gap---up to 72~pp against an isotropic RS baseline that
shares the DC-noise-augmented training recipe---primarily reflects the
training--certification mismatch rather than DC alignment alone:
isotropic test-time noise perturbs uncontrollable features the
attacker cannot exploit, triggering abstention rates up to 68\%.
RT-IoT2022 probes the limits of the method: it fails under the default
fine-tuning recipe but recovers to 76\%/69\% certified accuracy
(LLaMA3-8B/Qwen3-8B) when noise augmentation is increased.
The framework provides a principled basis for certified defenses in
traffic-domain settings beyond the configurations evaluated.
\end{abstract}

\begin{graphicalabstract}
  \includegraphics[width=\textwidth]{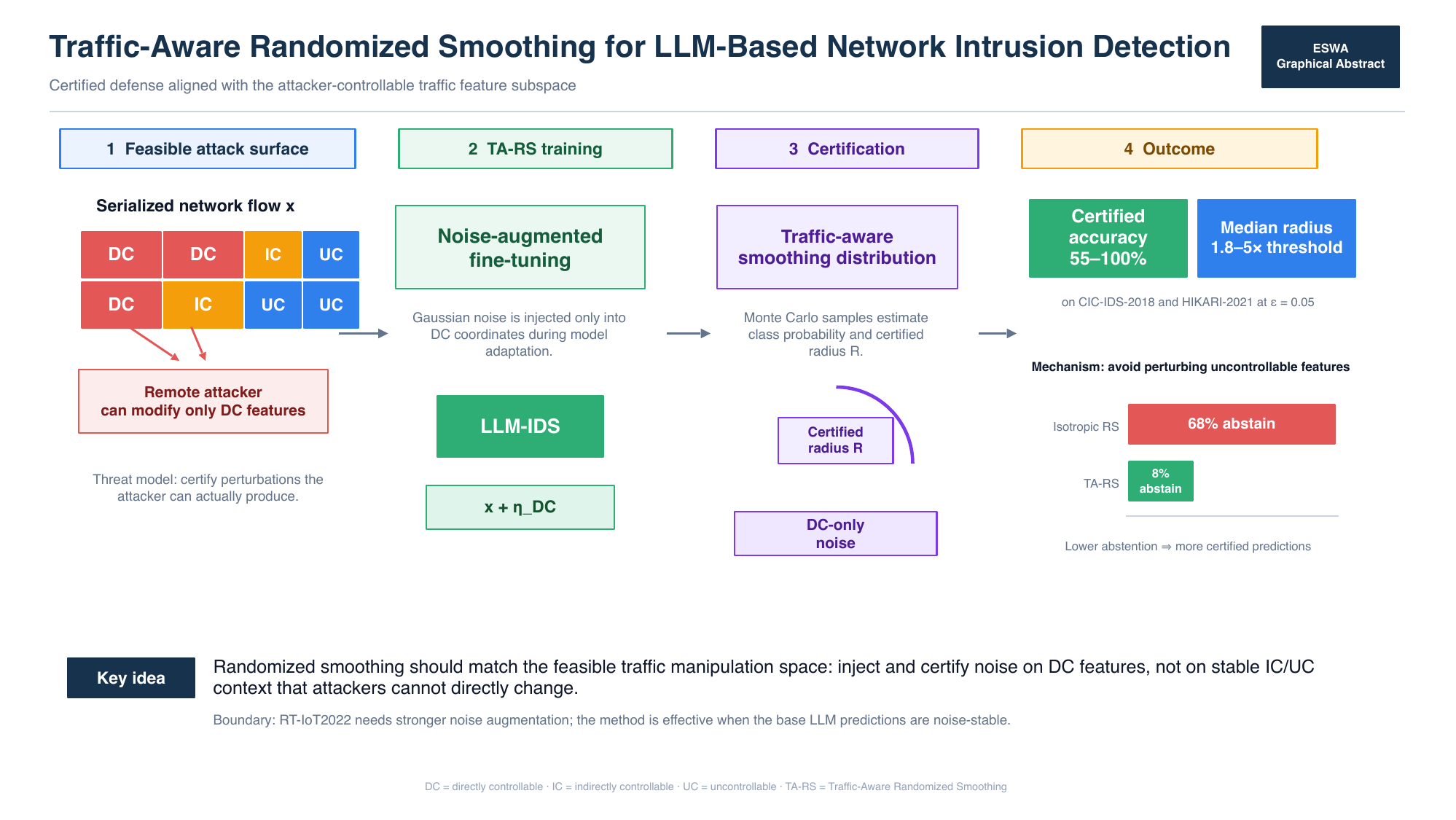}
\end{graphicalabstract}

\begin{highlights}
  \item DC/IC/UC taxonomy identifies which traffic features are attacker-controllable.
  \item Noise-augmented fine-tuning is prerequisite: clean LLMs yield near-random CA.
  \item TA-RS eliminates abstention on non-controllable features (68\%$\to$8\% on CIC).
  \item Anisotropic RS confirms DC-only smoothing is principled ($\leq$2\,pp difference).
  \item Certified accuracy 55--100\% on CIC/HIKARI at L$_\infty$ budget $\varepsilon{=}0.05$.
\end{highlights}

\begin{keyword}
certified robustness, randomized smoothing, network intrusion detection,
large language models, adversarial machine learning, traffic-aware noise
\end{keyword}

\end{frontmatter}

\section{Introduction}
\label{sec:intro}

Network intrusion detection systems (IDS) increasingly employ
LLM-style models that classify serialized traffic features into
attack categories.  Before serialization, however, a network adversary
can alter packet sizes, flags, or port numbers---features that remain
under attacker control regardless of the classification model.

A companion paper~\cite{companion_a1} formalizes this attack surface by
partitioning network-flow features into \emph{directly controllable}
(DC), \emph{indirectly controllable} (IC), and \emph{uncontrollable}
(UC) sets.  Adversarial perturbations confined to the DC subspace
already exceed 60\% attack success rate against LLM-IDS on multiple
modern datasets.  The defense problem is therefore sharply scoped: can
we certify the classifier against perturbations the attacker can
actually produce?

Empirical defenses---adversarial training~\cite{madry2018towards,goodfellow2015explaining},
preprocessing, anomaly filtering---reduce observed attack success but cannot give a
worst-case guarantee outside the attacks used during evaluation.
Randomized Smoothing (RS)~\cite{cohen2019certified} does provide such a
guarantee, but applying it directly to LLM-IDS misses two conditions
specific to traffic classification.

The first is \emph{threat alignment}.  Standard RS injects isotropic
noise into every feature, including UC features the remote attacker
cannot set.  Certifying those directions inflates the effective
perturbation set beyond what is feasible, and the smoothed classifier
frequently abstains on irrelevant noise directions.

The second---and in our experience the more important---is
\emph{model stability}.  Clean-trained LLM-IDS are not stable under
smoothing noise: in our experiments they reach only 14--57\% certified
accuracy on CIC-IDS-2018 and HIKARI-2021 before any subspace choice
is made.  Without DC-noise-augmented fine-tuning, even perfect
threat alignment cannot produce a useful certificate.

This paper introduces Traffic-Aware Randomized Smoothing (TA-RS) around
these two coupled conditions.  TA-RS smooths only the DC feature
subspace using $\mathcal{N}(0,\sigma^2\mathbf{I}_{\DC})$ and fine-tunes
the classifier with traffic-aware noisy copies so that training and
certification occur in the same attacker-controllable subspace.  The
resulting certificate is a DC-subspace L2 certificate; its relationship
to the L$_\infty$ traffic budgets used in the companion attack model is
made explicit rather than assumed.

TA-RS is classifier-agnostic in principle: it can wrap any differentiable
or black-box classifier.  We instantiate it with LLM-based classifiers
because text-serialized inputs accept heterogeneous feature schemas without
structural modification.  Fixed-feature tabular baselines (e.g., NAXGB)
can equally be wrapped by TA-RS within a dataset, but require
dimension-matched retraining for new datasets.  Section~\ref{sec:cross_domain}
shows a key limitation: dataset-specific fine-tuning is non-negotiable,
and pretraining knowledge does not transfer without traffic-aware tuning.

The experiments test each condition in turn.  Noise-augmented
fine-tuning raises certified accuracy from weak clean-trained baselines
to 68--100\% on CIC-IDS-2018 and HIKARI-2021; with noise-stable models,
traffic-aware smoothing then outperforms isotropic smoothing by up to
72~pp (4--19~pp residual advantage against iso-trained RS), mainly by
eliminating abstention on non-attacker-controllable noise directions.
RT-IoT2022 is included to probe the limits: the default recipe fails
(CA $\leq 24\%$), but increasing $n_\mathrm{aug}$ to 4 recovers
76\%/69\% certified accuracy (LLaMA3/Qwen3), confirming that
noise-stability---not DC-alignment---is the binding constraint in the
hard case.

\noindent\textbf{Contributions.} We make four contributions:
\begin{enumerate}
  \item \textbf{Problem formulation}: we operationalize the DC/IC/UC
    controllability taxonomy as the certification subspace for LLM-based IDS.
    The taxonomy categorizes features by attacker reachability; the
    dataset-specific DC masks and feature counts required to reproduce
    this work are defined in full in Section~\ref{sec:method}
    (Table~\ref{tab:dc_features}).
    Theorem~\ref{thm:ta_rs} applies the
    Cohen et al.~\cite{cohen2019certified} certificate to the DC subspace;
    the contribution is the problem frame, not the certificate formula.
  \item \textbf{Noise-stability requirement}: we demonstrate that clean-trained
    LLM-IDS often yield weak certificates under RS, while traffic-aware
    noise-augmented fine-tuning recovers certified accuracy to
    68--100\% on CIC-IDS-2018 and HIKARI-2021
    (Section~\ref{sec:results}).
  \item \textbf{Mechanism and boundary}: training--certification
    match---not DC selection per se---is the binding factor.  Against a
    fairly iso-trained baseline the residual TA-RS advantage is 4--19~pp;
    matched random-subspace RS performs within 2~pp on average.
    RT-IoT2022 is the failure-boundary case where noise stability, not
    threat alignment, is the prerequisite (Section~\ref{sec:supp_baselines}).
  \item \textbf{Operational L$_\infty$ coverage}: the DC-subspace L2
    certificate covers the L$_\infty$-equivalent threat threshold
    ($R_\infty{=}\varepsilon\sqrt{|\DC|}$, $\varepsilon{=}0.05$) for
    55--100\% of CIC/HIKARI samples, with median certified radii
    $1.8$--$5{\times}$ above $R_\infty$ (Section~\ref{sec:l2linf}).
\end{enumerate}

\section{Related Work}
\label{sec:related}

\subsection{Certified Robustness and the Choice of Perturbation Space}

Formal robustness verification for neural networks began with
satisfiability-based exact verifiers~\cite{katz2017reluplex,
tjeng2019evaluating} and was quickly overtaken by relaxation-based
methods (CROWN~\cite{zhang2018efficient}, $\alpha$-CROWN) capable
of scaling to moderate-depth networks.  These approaches yield exact
or tight certificates but have not scaled to billion-parameter models.
For LLM-IDS, scalability is not the only issue: the certificate must
also be stated over a perturbation set that matches what a network
attacker can change.

Randomized Smoothing~\cite{cohen2019certified} offers a
complementary approach: by evaluating the base classifier under
Gaussian noise and returning the plurality class, one obtains a
smoothed classifier certifiable by a Neyman--Pearson argument.
Lecuyer et al.~\cite{lecuyer2019certified} introduced an earlier
DP-based framework; Cohen et al.\ tightened the bound and proved
its optimality among all noise-based smoothing approaches.
Subsequent work extended RS to other threat models (L1, Lp via
general smoothing distributions~\cite{yang2020randomized}),
improved the base classifier with adversarial training of the
smoothed model~\cite{salman2019provably}, and analyzed failure modes
of RS~\cite{kumar2020certifiedLimits}.
Derandomized Smoothing~\cite{levine2020derandomized} achieves
certificates against patch attacks by averaging over non-overlapping
patches rather than Gaussian noise.
Anisotropic randomized smoothing~\cite{anisoRS2022} optimizes per-feature
noise variances data-drivenly to maximize certified accuracy.
TA-RS fixes $\sigma_{\mathrm{DC}}{=}\sigma$, $\sigma_{\mathrm{IC/UC}}{=}0$
based on threat-model semantics: noise outside $\DC$ is wasted budget that
increases abstention without covering additional attack directions.
Neither this nor prior RS work addresses which traffic features should be
perturbed when the attacker is bounded to a known feature subspace.
TA-RS retains the same certification principle but restricts smoothing
support to the DC subspace.

\subsection{Feasible Adversarial Perturbations in Network IDS}

The vulnerability of ML-based IDS to adversarial examples was
established early~\cite{goodfellow2015explaining, corona2013adversarial, biggio2013evasion}.
Subsequent work demonstrated practical evasion of deep-learning
IDS~\cite{yang2018adversarial, han2021evaluating} and proposed
empirical countermeasures such as adversarial training~\cite{madry2018towards},
ensemble defenses, and input normalization~\cite{apruzzese2023role}.
These defenses are evaluated against
specific attacks; they do not give a worst-case certificate over a
specified traffic-feature set.

Feature-space constraints are central in network security.
Constraint-aware attacks, including the problem-space perspective
of Pierazzi et al.~\cite{pierazzi2020intriguing}, demonstrate that semantic
feasibility changes the effective perturbation space---a certificate for
IDS should therefore be defined over attacker-controllable coordinates,
not the full feature vector.

\subsection{LLM-Based Security Models and Noise Stability}

LLMs have been applied to log analysis~\cite{logllm2024},
vulnerability description~\cite{vuln2024}, and threat
intelligence~\cite{threatintel2024}.  For intrusion detection,
pre-trained language models have also been adapted to network-traffic
classification.  These systems
serialize structured traffic features into text-like inputs, so
small feature perturbations can change the prompt received by the
model.  The question is therefore not only whether an LLM-IDS is
accurate on clean traffic, but whether its predictions remain stable
under the feature perturbations used by a certifier.

Randomized smoothing assumes a stable base classifier; without it,
the smoothed classifier abstains or collapses to a weak majority.
Existing LLM-IDS studies do not establish this condition under
traffic-feature noise.  Our companion work~\cite{companion_a1} provides
the feasible attack model; this paper asks what training and smoothing
distribution are needed to certify an LLM-IDS within it.

\section{Preliminaries}
\label{sec:prelim}

\subsection{LLM-IDS Pipeline}

Let $\mathbf{x} \in \mathbb{R}^d$ denote a normalized network-flow
feature vector.  An LLM-IDS applies a serialization function
$\text{ser}: \mathbb{R}^d \to \Sigma^*$ (mapping feature values to
a natural language prompt) and a fine-tuned LLM $\text{LM}: \Sigma^*
\to \mathcal{Y}$ (mapping prompts to class labels), so
$f(\mathbf{x}) = \text{LM}(\text{ser}(\mathbf{x}))$.

\subsection{Attacker Capability Model}
\label{sec:capability}

Following~\cite{companion_a1}, we partition the $d$ features into
three disjoint sets:
\begin{itemize}
  \item \textbf{Directly controllable (DC)}: $\DC \subset [d]$,
    features the attacker can set precisely (e.g., destination port,
    TCP flags, packet sizes, window size).  $|\DC|$ ranges from
    14 to 27 depending on the dataset (CIC-IDS-2018: 14, HIKARI-2021: 25, RT-IoT2022: 27).
  \item \textbf{Indirectly controllable (IC)}: features that emerge
    from sustained behavioral patterns (e.g., inter-arrival time
    statistics, flow rate aggregates).
  \item \textbf{Uncontrollable (UC)}: features determined by the
    remote host or network infrastructure (e.g., destination byte
    counts, server response metrics).
\end{itemize}

Table~\ref{tab:dc_features} lists the DC features for each dataset evaluated
in this paper, grounded in network-protocol semantics: these are the
forward-direction header and payload fields an attacker sets when crafting
or replaying traffic, independent of what the network or remote host does.
The IC features (flow-level aggregates, inter-arrival statistics) require
sustained behavioral manipulation and are excluded; UC features are fully
determined by the remote host or infrastructure.

Timing features illustrate the classification boundary.
\emph{Forward} inter-arrival time (IAT) fields---the delays between
packets the attacker sends---are DC for HIKARI-2021 and RT-IoT2022,
because the attacker schedules their own outgoing packets
regardless of server behavior.
Bidirectional flow-level IAT aggregates (e.g., \texttt{Flow IAT Mean/Std}
in CIC-IDS-2018) are IC, because they reflect both attacker-side and
server-side packet timing and cannot be set precisely by the attacker alone.

\begin{table}[t]
\centering
\caption{DC feature taxonomy per dataset.  Features are classified by
network-protocol controllability, not model performance.
Binary: TCP flag bits.  Count: integer-valued forward-direction counters.
Continuous: normalized numeric forward-direction fields.}
\label{tab:dc_features}
\setlength{\tabcolsep}{3pt}
\begin{tabular}{lll}
\toprule
Dataset & $|\DC|$ & DC feature groups \\
\midrule
CIC-IDS-2018     & 14 & Dst port, protocol; fwd pkt/byte counts; \\
                 &    & TCP flags (FIN/SYN/RST/PSH/ACK, binary); \\
                 &    & fwd header len; init fwd win bytes; fwd data pkts \\
\midrule
HIKARI-2021      & 25 & Orig/resp port; fwd pkt/data counts; \\
                 &    & fwd header size (tot/min/max); TCP flags (PSH/URG); \\
                 &    & fwd payload stats (min/max/tot/avg/std); \\
                 &    & fwd IAT (min/max/tot/avg/std); \\
                 &    & fwd subflow/bulk bytes \& pkts; fwd win sizes \\
\midrule
RT-IoT2022       & 27 & Orig/resp port, proto, service; \\
                 &    & fwd pkt/data counts; fwd header size (tot/min/max); \\
                 &    & fwd flags (PSH/URG); fwd payload stats; \\
                 &    & fwd IAT stats; fwd subflow/bulk stats; \\
                 &    & fwd init/last window size \\
\bottomrule
\end{tabular}
\end{table}

\begin{remark}[Feature independence]
\label{rem:independence}
The DC-subspace noise model perturbs each DC feature independently.
Features with natural statistical dependencies (e.g., \texttt{fwd\_iat\_avg}
and \texttt{fwd\_iat\_std}) occupy a constrained manifold, so an adversary
restricted to feasible traffic vectors cannot exploit the full L2 ball.
The certificate's soundness is unaffected---independent Gaussian noise
subsumes feasible adversarial perturbations, so the stated radius is a
valid lower bound---but the certificate may be conservative relative to a
tighter analysis of the feasible manifold.
\end{remark}

Let $\mathbf{I}_{\DC} \in \{0,1\}^{d \times d}$ be the diagonal
indicator matrix for DC features ($[\mathbf{I}_{\DC}]_{ii} = 1$ iff
$i \in \DC$, zero otherwise).

\subsection{Threat Model}

An adversary observes a flow $\mathbf{x}$ that $f$ correctly
classifies as malicious and seeks a perturbation $\bm{\delta}^*$
with support in $\DC$ such that $f(\mathbf{x} + \bm{\delta}^*) \neq
f(\mathbf{x})$.  The attacker's budget is parameterized by
$\varepsilon \in \{0.05, 0.15, 0.30\}$ (L$_\infty$ norm over DC
features in normalized space), consistent with~\cite{companion_a1}.
We seek a defense that certifies
$g(\mathbf{x} + \bm{\delta}^*) = f(\mathbf{x})$ for all
$\bm{\delta}^*$ with $\mathrm{supp}(\bm{\delta}^*) \subseteq \DC$
and $\norm{\bm{\delta}^*}_2 \leq R$ for a certifiable radius $R$.
Note that the attacker budget is in L$_\infty$ while the certificate
derived in this paper is in L2; the two are related but not equivalent.
Remark~\ref{rem:l2_linf} and Section~\ref{sec:l2linf} make this gap
explicit, and TA-RS should not be interpreted as a full L$_\infty$
certificate against the companion threat budget.

\subsection{Randomized Smoothing (Cohen et al.\ 2019)}

\begin{definition}[Smoothed classifier]
Given a base classifier $f: \mathbb{R}^d \to \mathcal{Y}$ and
noise distribution $\mathcal{N}(0, \sigma^2 \mathbf{M})$ parameterized by
positive semidefinite $\mathbf{M}$, the smoothed classifier is
\begin{equation}
  g(\mathbf{x}) = \argmax_{c \in \mathcal{Y}}
  \Pr_{\bm{\delta} \sim \mathcal{N}(0,\sigma^2\mathbf{M})}
  \!\bigl[f(\mathbf{x} + \bm{\delta}) = c\bigr].
  \label{eq:smooth}
\end{equation}
\end{definition}

\begin{theorem}[Cohen et al., 2019]\label{thm:cohen}
Let $\mathbf{M} = \mathbf{I}$ (isotropic noise).  Suppose
$g(\mathbf{x}) = c_A$ and let $\pA = \Pr[f(\mathbf{x}+\bm{\delta})
= c_A]$.  If $\pA > \tfrac{1}{2}$, then
$g(\mathbf{x} + \bm{\delta}^*) = c_A$ for all $\norm{\bm{\delta}^*}_2
\leq \sigma\Phiinv(\pA)$.
\end{theorem}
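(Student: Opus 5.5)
The plan is the Neyman--Pearson argument of Cohen et al. Fix $\mathbf{x}$ with $g(\mathbf{x})=c_A$ and $\pA>\tfrac12$. Pick an arbitrary $\bm{\delta}^*$ with $\norm{\bm{\delta}^*}_2\le\sigma\Phiinv(\pA)$. Set $X=\mathbf{x}+\bm{\delta}$ and $Y=\mathbf{x}+\bm{\delta}^*+\bm{\delta}$, where $\bm{\delta}\sim\mathcal{N}(0,\sigma^2\mathbf{I})$. We must show $\Pr[f(Y)=c_A]>\Pr[f(Y)=c]$ for every $c\neq c_A$. Since the classes partition the outcomes, it suffices to show $\Pr[f(Y)=c_A]>\tfrac12$. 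That forces $c_A$ to be the strict plurality, so $g(\mathbf{x}+\bm{\delta}^*)=c_A$. For the boundary case of equality in the radius we get $\Pr[f(Y)=c_A]\ge \tfrac12$; there I would either state the result with strict inequality $<$, or note that ties are measure-zero under the usual convention.

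\textbf{Neyman--Pearson step.} The likelihood ratio $\mu_Y(\mathbf{z})/\mu_X(\mathbf{z})$ is monotone in the linear statistic $\langle \bm{\delta}^*,\mathbf{z}-\mathbf{x}\rangle$. Hence the half-spaces $A=\{\mathbf{z}:\langle\bm{\delta}^*,\mathbf{z}-\mathbf{x}\rangle\le \sigma\norm{\bm{\delta}^*}_2\Phiinv(\pA)\}$ are likelihood-ratio level sets. I choose the threshold so that $\Pr[X\in A]=\pA$. Apply the Neyman--Pearson lemma to the set $S=\{\mathbf{z}: f(\mathbf{z})=c_A\}$, which satisfies $\Pr[X\in S]\ge \pA=\Pr[X\in A]$. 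This gives $\Pr[Y\in S]\ge\Pr[Y\in A]$. Both directions of the lemma should be stated cleanly: sets of the form $\{\mu_Y/\mu_X\le t\}$ minimize $Y$-mass subject to a lower bound on $X$-mass. I expect this to be the only delicate point. I would prove it by the standard decomposition $\Pr[Y\in S]-\Pr[Y\in A]=\int_{S\setminus A}\mu_Y-\int_{A\setminus S}\mu_Y$, bounding $\mu_Y\ge t\mu_X$ on $S\setminus A$ and $\mu_Y\le t\mu_X$ on $A\setminus S$.

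\textbf{Gaussian computation.} Project onto the unit direction $\bm{\delta}^*/\norm{\bm{\delta}^*}_2$. Under $X$ the statistic $\langle\bm{\delta}^*,\mathbf{z}-\mathbf{x}\rangle$ is $\mathcal{N}(0,\sigma^2\norm{\bm{\delta}^*}_2^2)$, which confirms $\Pr[X\in A]=\Phi(\Phiinv(\pA))=\pA$. Under $Y$ its mean shifts to $\norm{\bm{\delta}^*}_2^2$. Therefore $\Pr[Y\in A]=\Phi\bigl(\Phiinv(\pA)-\norm{\bm{\delta}^*}_2/\sigma\bigr)$. This is $>\tfrac12$ exactly when $\norm{\bm{\delta}^*}_2<\sigma\Phiinv(\pA)$, and that completes the argument.

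Finally, I would note the two degenerate cases. If $\bm{\delta}^*=0$ the claim is trivial. If $\pA$ is replaced by a lower confidence bound $\underline{p_A}\le\pA$, the same argument goes through unchanged, because the radius $\sigma\Phiinv(\cdot)$ is monotone in its argument. This observation is what the DC-subspace version will later reuse by restricting the argument to coordinates in $\DC$.
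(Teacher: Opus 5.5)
Your proposal is correct and is the Neyman--Pearson argument of Cohen et al.\ that the paper cites for this result. The paper gives no proof of its own and reuses that same argument verbatim for Theorem~\ref{thm:ta_rs}; your half-space $A$, the likelihood-ratio comparison, and the computation $\Pr[Y\in A]=\Phi\bigl(\Phiinv(\pA)-\norm{\bm{\delta}^*}_2/\sigma\bigr)$ all check out.

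The one thing to fix is the boundary case. At $\norm{\bm{\delta}^*}_2=\sigma\Phiinv(\pA)$ the tie can be exact: take $f=c_A$ on $A$ and $f=c_B$ off $A$, and both classes get probability exactly $\tfrac12$ at $\mathbf{x}+\bm{\delta}^*$. Since the claim quantifies over every such $\bm{\delta}^*$, your ``measure-zero'' escape does not work. Commit to the strict-inequality version $\norm{\bm{\delta}^*}_2<\sigma\Phiinv(\pA)$, as Cohen et al.\ state it; the paper's $\leq$ is a slight overstatement.
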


\section{Traffic-Aware Randomized Smoothing}
\label{sec:method}

TA-RS operationalizes the two conditions introduced above.  The
smoothing distribution is aligned with the attacker's feasible feature
subspace, and the base LLM is trained to be stable under that same
distribution.  The first design choice determines what perturbations
the certificate covers; the second determines whether the smoothed
classifier can make a non-abstaining prediction under those
perturbations.

\subsection{Traffic-Aware Noise Distribution}

The threat-alignment component is the smoothing distribution.  Standard
RS uses $\mathbf{M}=\mathbf{I}$ and perturbs every feature.  TA-RS
instead perturbs only DC features:
\begin{equation}
  \bm{\delta} \sim \mathcal{N}\!\bigl(\mathbf{0},\,
  \sigma^2 \mathbf{I}_{\DC}\bigr),
  \label{eq:ta_noise}
\end{equation}
where $\mathbf{I}_{\DC}$ is the DC indicator matrix
(Section~\ref{sec:capability}).  This choice does not change the
randomized-smoothing principle; it changes the support of the noise so
that the smoothed classifier is evaluated in the same feature subspace
that the attacker can directly manipulate.  The empirical claim tested
in Section~\ref{sec:results} is that this alignment reduces abstention
relative to isotropic smoothing.

\textbf{Discrete DC features.}
Some DC features are integer-valued (e.g., packet counts, port numbers).
The implementation clips noisy samples to protocol-specified per-feature bounds:
$x_i' = \mathrm{clip}(x_i + \delta_i,\, [\ell_i, u_i])$, where
$[\ell_i, u_i]$ is the per-feature normalized range derived from
domain-defined limits (e.g.\ non-negative packet counts, binary flag range
$[0,1]$), not from any statistics of the observed data.
Because clip is a deterministic non-expansive map, defining the composed
classifier $h(\mathbf{x}) = f(\mathrm{clip}(\mathbf{x}))$ absorbs this step
into the base classifier; Theorem~\ref{thm:ta_rs} applies to $h$ exactly,
without approximation.
The following characterizes what this means for each discrete feature type.

For \emph{count-valued} DC features (port, packet count, byte count, header
length, window size), normalization places each integer step at
$\Delta_i = 1/(\max_i - \min_i)$, which is $\leq 10^{-4}$ across all
evaluated datasets.  Since $\Delta_i \ll \sigma{=}0.25$, the rounding
error is negligible and the continuous certificate holds to high accuracy.

For \emph{binary flag} DC features (TCP flag bits, normalized to
$\{0,1\}$), rounding constitutes a threshold operation with boundary at
$0.5$.  Crucially, the smallest certified radius used in this paper is
$r{=}0.05$ and the largest is $r{=}0.15$, both strictly below the
rounding boundary $0.5$.  An L2-bounded adversary with budget $R \leq 0.15$
therefore cannot shift any binary flag feature past the rounding threshold
deterministically.  As a result, the continuous Gaussian certificate at
$r \leq 0.15$ is \emph{conservative} for binary flag features: the true
discrete robustness guarantee is at least as strong as the stated L2 bound,
because the adversary's budget does not suffice to flip any flag with
certainty.  Section~\ref{sec:limitations} discusses the residual discretization
gap and a path to a fully rigorous mixed certificate.

For \emph{categorical} DC features (e.g., \texttt{proto} and
\texttt{service} in RT-IoT2022), each category is encoded as a
normalized integer label.  The composed classifier
$h(\mathbf{x}) = f(\mathrm{ser}(\mathrm{clip}(\mathbf{x}+\bm{\delta})))$
absorbs the full preprocessing pipeline---including nearest-integer
recovery---into the base classifier; Theorem~\ref{thm:ta_rs} applies
to $h$ exactly.  The normalized step per category
$\Delta_i = 1/(C_i-1)$ is $0.5$ for \texttt{proto} ($C_i{=}3$)
and ${\approx}0.015$ for \texttt{service} ($C_i{=}66$), placing
both within the conservative regime established for count features.

\subsection{Noise-Augmented Fine-Tuning}
\label{sec:finetune}

The stability component is the training distribution.  A smoothed
classifier is useful only if the base classifier gives consistent
predictions under the same noise used at certification time.  This is
especially important for serialized LLM-IDS: a small change in a
numeric feature can change the prompt tokens and shift the model's
output distribution.  Consistent with image-domain RS practice, where
Gaussian augmentation improves smoothed accuracy~\cite{salman2019provably},
we train the LLM on the same DC-subspace noise used by the certifier.

For each training sample $(\mathbf{x}, y)$, traffic-aware
noise-augmented fine-tuning includes one clean copy and $K{=}2$ noisy
copies $(\mathbf{x}+\bm{\delta}_k,y)$, with
$\bm{\delta}_k \sim \mathcal{N}(0,\sigma_{\mathrm{train}}^2
\mathbf{I}_{\DC})$ as in Eq.~\eqref{eq:ta_noise} and
$\sigma_{\mathrm{train}}=0.25$.  The purpose of this module is not to
change the certified set; that set is fixed by Eq.~\eqref{eq:ta_noise}
and Theorem~\ref{thm:ta_rs}.  Its purpose is to make the base LLM
stable enough that the smoothed classifier can avoid abstention.

\subsection{Certification Procedure}

Given the aligned noise distribution and the noise-augmented base
model, certification follows the standard RS two-stage structure.  For
a test sample $\mathbf{x}$, TA-Certify first identifies the top class
under DC-subspace noise and then estimates a lower confidence bound on
the probability of that class.  Algorithm~\ref{alg:certify} describes
the procedure.

\begin{algorithm}[t]
\caption{Traffic-Aware Certification (TA-Certify)}
\label{alg:certify}
\SetKwInOut{Input}{Input}\SetKwInOut{Output}{Output}
\Input{sample $\mathbf{x}$, base classifier $f$, DC mask $\mathbf{I}_{\DC}$,
noise level $\sigma$, counts $N_0, N$, confidence $\alpha$}
\Output{class prediction $c_A$ and certified radius $R$, or \textsc{Abstain}}

\tcp{Phase 1: identify top class}
Draw $\bm{\delta}_i \sim \mathcal{N}(0, \sigma^2 \mathbf{I}_{\DC})$,
$i=1,\ldots,N_0$\;
$c_A \leftarrow \argmax_c \bigl|\{i : f(\mathbf{x}+\bm{\delta}_i)=c\}\bigr|$\;

\tcp{Phase 2: estimate $\pA$ with confidence}
Draw $\bm{\delta}_i \sim \mathcal{N}(0, \sigma^2 \mathbf{I}_{\DC})$,
$i=1,\ldots,N$\;
$n_A \leftarrow |\{i : f(\mathbf{x}+\bm{\delta}_i) = c_A\}|$\;
$\hat{p}_A \leftarrow \text{CP-Lower}(n_A,\, N,\, \alpha)$
\tcp*{Clopper--Pearson $1{-}\alpha$ lower bound}

\eIf{$\hat{p}_A > \tfrac{1}{2}$}{
  $R \leftarrow \sigma \cdot \Phiinv(\hat{p}_A)$\;
  \Return $(c_A,\, R)$\;
}{
  \Return \textsc{Abstain}\;
}
\end{algorithm}

\noindent We employ $N_0 {=} 20$, $N {=} 200$, $\alpha {=} 0.001$
throughout.  The value $N{=}200$ is smaller than typical image-domain
RS because each certification sample requires LLM inference; the
consequence is a conservative Clopper--Pearson lower bound, analyzed in
Section~\ref{sec:discussion}.

\subsection{Certified Accuracy Metric}

The evaluation metric must penalize both instability and wrong
predictions.  We therefore report \emph{certified accuracy at radius
$r$}:
\begin{equation}
  \mathrm{CA}(r) = \frac{1}{|S|}
  \sum_{\mathbf{x} \in S}
  \mathbf{1}\bigl[g^{\mathrm{TA}}(\mathbf{x}) = y_{\mathbf{x}}
               \;\wedge\; R(\mathbf{x}) \geq r\bigr],
  \label{eq:ca}
\end{equation}
where $S$ is the evaluation set.  Crucially, $\mathrm{CA}(r)$
penalizes both incorrect predictions \emph{and} abstentions,
so a classifier that predicts a majority class under noise cannot obtain
a high score unless it is also correct and certified.

\subsection{Formal Guarantee}

The theorem below states what the preceding design certifies.  It is a
DC-subspace version of the randomized-smoothing guarantee: the support
condition on $\bm{\delta}^*$ is part of the claim, not a post-hoc
interpretation.

\begin{theorem}[Traffic-Aware Certified Robustness]
\label{thm:ta_rs}
Let $g^{\mathrm{TA}}$ be defined by~\eqref{eq:smooth} with
$\mathbf{M} = \mathbf{I}_{\DC}$.  Suppose $g^{\mathrm{TA}}(\mathbf{x})
= c_A$ and $\pA = \Pr_{\bm{\delta}\sim\mathcal{N}(0,\sigma^2
\mathbf{I}_{\DC})}[f(\mathbf{x}+\bm{\delta})=c_A] > \tfrac{1}{2}$.
Then for every $\bm{\delta}^*$ satisfying
$\mathrm{supp}(\bm{\delta}^*) \subseteq \DC$ and
$\norm{\bm{\delta}^*}_2 \leq \sigma\Phiinv(\pA)$,
\begin{equation}
  g^{\mathrm{TA}}(\mathbf{x} + \bm{\delta}^*) = c_A.
\end{equation}
\end{theorem}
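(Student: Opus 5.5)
The plan is to reduce Theorem~\ref{thm:ta_rs} to Theorem~\ref{thm:cohen} by restricting attention to the DC coordinates. Write every $\mathbf{z}\in\mathbb{R}^d$ as $(\mathbf{z}_{\DC},\mathbf{z}_{\overline{\DC}})$, with $k=|\DC|$. Fix the non-DC part $\mathbf{x}_{\overline{\DC}}$ of the test point and define the reduced base classifier $\tilde f:\mathbb{R}^{k}\to\mathcal{Y}$ by $\tilde f(\mathbf{u}) = f(\mathbf{u},\mathbf{x}_{\overline{\DC}})$. If clipping or serialization is present, $f$ is replaced by the composed $h$, as the paper already arranges. A sample $\bm{\delta}\sim\mathcal{N}(0,\sigma^2\mathbf{I}_{\DC})$ is almost surely zero off $\DC$, and its DC block is distributed as $\mathcal{N}(0,\sigma^2\mathbf{I}_k)$. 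Hence, for every class $c$ and every $\mathbf{z}$ sharing the non-DC part $\mathbf{x}_{\overline{\DC}}$,
\[
\Pr_{\bm{\delta}\sim\mathcal{N}(0,\sigma^2\mathbf{I}_{\DC})}\bigl[f(\mathbf{z}+\bm{\delta})=c\bigr]
=\Pr_{\bm{\eta}\sim\mathcal{N}(0,\sigma^2\mathbf{I}_k)}\bigl[\tilde f(\mathbf{z}_{\DC}+\bm{\eta})=c\bigr].
\]
Consequently $g^{\mathrm{TA}}(\mathbf{z}) = \tilde g(\mathbf{z}_{\DC})$, where $\tilde g$ is the isotropic Cohen smoothing of $\tilde f$ in $\mathbb{R}^k$. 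The argmax tie-breaking rule should be fixed identically in both definitions.

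Next, apply this to $\mathbf{z}=\mathbf{x}+\bm{\delta}^*$. Because $\mathrm{supp}(\bm{\delta}^*)\subseteq\DC$, the perturbed point has the same non-DC part $\mathbf{x}_{\overline{\DC}}$, so the identity above holds with the same $\tilde f$. Also $\norm{\bm{\delta}^*_{\DC}}_2=\norm{\bm{\delta}^*}_2$. The hypotheses transfer directly: $\tilde g(\mathbf{x}_{\DC})=c_A$, and $\Pr[\tilde f(\mathbf{x}_{\DC}+\bm{\eta})=c_A]=\pA>\tfrac12$. Theorem~\ref{thm:cohen}, applied in dimension $k$ with $\mathbf{M}=\mathbf{I}_k$, then gives $\tilde g(\mathbf{x}_{\DC}+\bm{\delta}^*_{\DC})=c_A$ whenever $\norm{\bm{\delta}^*_{\DC}}_2\le\sigma\Phiinv(\pA)$. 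Translating back through the identity yields $g^{\mathrm{TA}}(\mathbf{x}+\bm{\delta}^*)=c_A$.

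The main point of care is the first step. The noise $\mathcal{N}(0,\sigma^2\mathbf{I}_{\DC})$ is degenerate, with no density on $\mathbb{R}^d$, so the Neyman--Pearson argument behind Theorem~\ref{thm:cohen} cannot be applied in the ambient space. The reduction is sound only because both the smoothing noise and the adversarial shift live in the same coordinate subspace. If $\bm{\delta}^*$ had any non-DC component, the two Gaussians would be mutually singular and no certificate would follow, so the support hypothesis is essential. I would state this explicitly.

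I would also check two smaller points. First, $\tilde f$ need only be measurable, which holds for the deterministic composed classifier $h$. Second, Cohen's theorem is stated with the exact $\pA$, and the same bound holds for any lower bound on $\pA$ exceeding $\tfrac12$. That monotonicity links the theorem to the Clopper--Pearson value used in Algorithm~\ref{alg:certify}.
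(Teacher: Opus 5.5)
Your proposal is correct and follows the same route as the paper's proof. Both restrict to the $\DC$ coordinates, where $\mathcal{N}(0,\sigma^2\mathbf{I}_{\DC})$ becomes the isotropic $\mathcal{N}(0,\sigma^2\mathbf{I}_{|\DC|})$, apply Cohen et al.'s certificate in $\mathbb{R}^{|\DC|}$, and transfer back using $\mathrm{supp}(\bm{\delta}^*)\subseteq\DC$; your version additionally makes explicit what the paper leaves implicit, namely the reduced classifier with frozen non-$\DC$ coordinates and the degeneracy of the ambient noise, which is why the support hypothesis is essential.
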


\begin{proof}
Since $\bm{\delta}$ and $\bm{\delta}^*$ both have support in $\DC$,
restricting to $\DC$ yields an injection
$\mathbb{R}^{\DC} \hookrightarrow \mathbb{R}^d$ that preserves
inner products and norms.  Within $\DC$, the distribution
$\mathcal{N}(0, \sigma^2 \mathbf{I}_{\DC})$ restricted to $\DC$
is $\mathcal{N}(0, \sigma^2 \mathbf{I}_{|\DC|})$.
The Neyman--Pearson argument of Cohen et al.~\cite{cohen2019certified}
applies in $\mathbb{R}^{|\DC|}$ verbatim, yielding the radius
$R = \sigma\Phiinv(\pA)$ within $\DC$.  Because
$\mathrm{supp}(\bm{\delta}^*) \subseteq \DC$ by assumption,
the certificate transfers to the full feature space.
\end{proof}

\begin{remark}[Subspace scope]
The guarantee is stated over the directly controllable subspace:
it certifies perturbations in $\DC$ with
$\norm{\bm{\delta}^*}_2 \leq R$.  Isotropic RS ($\mathbf{M}=\mathbf{I}$)
produces a certificate of the same form but for the full $\mathbb{R}^d$
space, certifying against perturbations in UC/IC dimensions that the
attacker cannot produce.  This broader certificate can be useful if
the threat model includes those dimensions, but it is mismatched to the
capability-constrained traffic adversary studied here.
\end{remark}

\begin{remark}[L2 certificate vs.\ L$_\infty$ threat budget]
\label{rem:l2_linf}
The companion threat model~\cite{companion_a1} bounds adversarial
perturbations in L$_\infty$: $\norm{\bm{\delta}^*}_\infty \leq
\varepsilon$ over DC features.  Our certificate is in L2: it
guarantees robustness for $\norm{\bm{\delta}^*}_2 \leq R$.  These
norms are related by $\norm{\cdot}_\infty \leq \norm{\cdot}_2
\leq \sqrt{|\DC|}\,\norm{\cdot}_\infty$, so a certificate at
$R{=}0.05$ (L2) covers L$_\infty$ perturbations up to
$R/\sqrt{|\DC|} \approx 0.010$--$0.013$, smaller than $\varepsilon_{\min}
{=}0.05$.  Conversely, the L2 norm of an L$_\infty$ adversarial
example with $\varepsilon{=}0.05$ is at most $0.05\sqrt{|\DC|}
\approx 0.19$--$0.26$; our certified radii ($r{=}0.05$ or $r{=}0.15$)
therefore do not fully cover the companion threat budget.
The second conversion ($R \geq \varepsilon\sqrt{|\DC|}$) is more useful:
it identifies the L2 radius at which a certificate directly covers an
L$_\infty$ adversary; Section~\ref{sec:l2linf} reports what fraction of
CIC/HIKARI samples exceed this threshold.
\end{remark}

\section{Experimental Setup}
\label{sec:setup}

The experiments are organized around the two requirements in the
central claim.  First, the base LLM must make stable predictions under
noise restricted to directly controllable traffic features.  Second,
the certificate should improve because the smoothing distribution
matches the attacker's feasible feature subspace, not because the
baseline is weaker or the evaluation radius is easier.  The setup
therefore keeps the base model, certification budget, sample count,
and noise level matched whenever TA-RS is compared with isotropic RS.

\subsection{Datasets}

We evaluate these requirements on three modern network-traffic
benchmark datasets:

\textbf{CIC-IDS-2018}~\cite{sharafaldin2018cicids}: four-class
classification (DoS, Normal, Exploitation, CredentialAccess) on 2018
network traces; 25 flow-level features after preprocessing, of which
14 are directly controllable (DC).
This is the most label-diverse dataset in our evaluation.

\textbf{HIKARI-2021}: three-class classification
(CredentialAccess, Exploitation, Normal) on encrypted traffic features;
82 features after preprocessing, of which 25 are DC.
Distinctive per-class feature distributions make this a relatively
structured dataset.

\textbf{RT-IoT2022}: five-class classification
(CredentialAccess, DoS, Exploitation, Normal, Probe)
over IoT traffic with heterogeneous device fingerprints and attack
types; 83 features after preprocessing, of which 27 are DC.
High within-class feature variability makes this the most
challenging dataset for certification.

Table~\ref{tab:datasets} summarises the three datasets.
All experiments employ a fixed prompt-aware train/test split to prevent
prompt-template leakage between training and evaluation.
The \texttt{StandardScaler} normalisation is fitted exclusively on the
training split and applied without refitting to the test split.
Feature clipping bounds $[\ell_i, u_i]$ are protocol-derived constants
(e.g.\ packet-count non-negativity, binary flag range $[0,1]$) and are
not estimated from any data split.
The certification subset is drawn once from the held-out test split
using a fixed random seed (42) and was not used for model selection or
hyperparameter tuning.

\begin{table}[t]
\centering
\caption{Dataset summary.
``Cert.~subset'' is the total number of test samples used for
certification (40 per class, stratified).}
\label{tab:datasets}
\setlength{\tabcolsep}{5pt}
\begin{tabular}{lcccc}
\toprule
Dataset & Classes & Features & DC features & Cert.\ subset \\
\midrule
CIC-IDS-2018  & 4 & 25 & 14 & 160 \\
HIKARI-2021   & 3 & 82 & 25 & 120 \\
RT-IoT2022    & 5 & 83 & 27 & 168$^{\dagger}$ \\
\bottomrule
\multicolumn{5}{l}{$^{\dagger}$One minority class contributes 8 samples
  (test-set availability).}
\end{tabular}
\end{table}

We do not evaluate on NSL-KDD or UNSW-NB15 in this work.  NSL-KDD
uses one-hot encoding that inflates the raw 48-feature space to
122 dimensions after encoding; our perturbation budget is defined over
the raw 48-feature space, so DC-subspace certification is not
well-defined in the encoded representation.  UNSW-NB15 raises the same
issue: the 42-feature DC/IC/UC taxonomy is defined at the raw-feature
level, but the data pipeline encodes categorical fields into a
193-dimensional vector, leaving the DC subspace ambiguous.
CICIoT2023~\cite{neto2023ciciot2023} is a natural candidate for future
evaluation; we defer it because it requires extending the DC/IC/UC
taxonomy and re-validating the perturbation budget alignment, which is
orthogonal to the certification framework studied here.
We discuss dataset scope in Section~\ref{sec:limitations}.

\subsection{Base Classifiers}

We evaluate two 8-billion-parameter LLMs fine-tuned for network
traffic classification:
\textbf{Qwen3-8B}~\cite{qwen3} and
\textbf{LLaMA3-8B}~\cite{llama3}.
Both models receive serialized feature prompts produced by a
dataset-specific template that converts normalized flow features
into natural language descriptions.
For noise-augmented fine-tuning, we extend the training set with
$K$ traffic-aware noisy copies per sample at $\sigma{=}0.25$.
The default is $K{=}2$ ($n_\text{aug}{=}2$); the effect of
increasing to $K{=}4$ ($n_\text{aug}{=}4$) is studied as a
sensitivity experiment for RT-IoT2022 (Section~\ref{sec:rt-iot}),
where greater noise exposure is required to stabilize the base model.

\subsection{Baselines}

The baselines separate three claims: whether certification is absent
without smoothing, whether noise-augmented LLMs can be certified, and
whether restricting smoothing to $\DC$ matters once the model has been
trained for noisy inputs.  The main comparison (Table~\ref{tab:main})
uses three configurations:
\begin{enumerate}
  \item \textbf{Undefended}: clean-trained LLM-IDS with no
    smoothing---certified accuracy is zero by definition.
  \item \textbf{Isotropic RS (noise-aug)}: the same DC-noise-augmented
    model wrapped with isotropic Gaussian noise at matching $\sigma$,
    certifying in $\mathbb{R}^d$ rather than $\DC$.
  \item \textbf{TA-RS (ours)}: DC-noise-augmented model with
    traffic-aware Gaussian noise, certifying in $\DC$.
\end{enumerate}

Baseline~(2) uses the same fine-tuned model as TA-RS; the only
test-time difference is the noise support.  This isolates threat
alignment from training, but does not preclude the possibility that
isotropic RS could benefit from a separately tuned training recipe.
Two supplementary controls address this and a second confound:
\begin{enumerate}\setcounter{enumi}{3}
  \item \textbf{Isotropic RS (iso-trained)}: a model fine-tuned with
    isotropic Gaussian noise augmentation and certified with isotropic
    RS.  This controls for training-recipe asymmetry.
  \item \textbf{Random-subspace RS}: the DC-noise-augmented model
    certified with noise restricted to a randomly chosen subspace of
    size $|\DC|$ (seed 99).  This controls for the possibility that
    the TA-RS advantage arises from dimensionality reduction rather
    than from DC-alignment specifically.  The expected overlap between
    a random subspace of size $|\DC|$ and the true DC set is
    $|\DC|^2 / d$ per dataset (CIC: $14^2/25{\approx}7.8$,
    HIKARI: $25^2/82{\approx}7.6$, RT-IoT: $27^2/83{\approx}8.8$
    out of $|\DC|$ features), so the random baseline intentionally
    samples many non-DC features.
\end{enumerate}
Results for baselines~(4) and~(5) are reported in
Section~\ref{sec:supp_baselines}.

\subsection{Certification Parameters and Metrics}

We employ $N_0{=}20$ samples for class selection, $N{=}200$ Monte Carlo
iterations for probability estimation, and $\alpha{=}0.001$ (confidence
$1{-}\alpha = 99.9\%$).  These values determine the conservativeness of
the certificate and are shared across TA-RS and isotropic RS.  Each
certification run evaluates 40 samples per class (stratified), yielding
160 test samples per (model, dataset) configuration for CIC-IDS-2018
(4-class), 120 for HIKARI-2021 (3-class), and 168 for RT-IoT2022
(5-class; one minority class contributes 8 samples due to test-set
availability).  All reported CA values are macro-averaged over classes (not
weighted by natural traffic distribution).

We certify at three noise levels $\sigma \in \{0.25, 0.50, 1.00\}$.
The primary metric is $\mathrm{CA}(r)$ (equation~\eqref{eq:ca}) at
$r \in \{0.05, 0.10, 0.15\}$ (L2 norm in $\DC$), because it counts only
samples that are both correctly classified and certified.  The secondary
metrics diagnose why certification succeeds or fails: abstention rate
$\mathrm{AR}$ measures whether the probability lower bound exceeds the
decision threshold; smooth accuracy
$\mathrm{SA} = \Pr[g^{\mathrm{TA}}(\mathbf{x}) = y]$ measures whether the
smoothed classifier remains correct; and median certified radius
$\tilde{R}$ over non-abstained samples measures the margin of the
certificate.
For aggregate proportions, we report Wilson 95\% confidence intervals
when the interval affects interpretation.

\subsection{Implementation Notes}
\label{sec:impl}

All experiments employ a single NVIDIA RTX PRO 6000 Blackwell (97~GB VRAM)
in BF16 precision; per-(model, dataset) certification across all
$\sigma$ values takes 1--2 hours.
Both LLMs are fine-tuned with LoRA~\cite{hu2022lora} ($r{=}16$, $\alpha{=}32$, dropout
0.05 on q/k/v/o projections), 5 epochs, effective batch 32,
lr $10^{-4}$ cosine with 5\% warmup, seed 42.
Noise augmentation appends $K{=}2$ DC-noisy copies per training sample
($3\times$ dataset), generated with $\mathcal{N}(0,\sigma^2\mathbf{I}_{|\DC|})$
at $\sigma{=}0.25$ clipped to protocol-specified feature bounds.
Qwen3-8B uses ChatML format; LLaMA3-8B uses raw completion;
inference is greedy (\texttt{max\_new\_tokens=16}, batch 64), taking the
first valid class token.
DC masks are derived from the \texttt{feature\_taxonomy} module:
14/25/27 DC features out of 25/82/83 total for
CIC-IDS-2018/HIKARI-2021/RT-IoT2022.

\section{Results}
\label{sec:results}

The results follow the same logic as the method.  Section~\ref{sec:finding1}
tests whether an LLM-IDS can be certified without first learning stable
predictions under DC noise.  Section~\ref{sec:finding2} tests whether,
given a noise-augmented base model, aligning the smoothing distribution
with $\DC$ improves certified accuracy over full-space smoothing.
Sections~\ref{sec:finding3}--\ref{sec:rt-iot} explain the mechanism and
the failure case.  Section~\ref{sec:finding4} checks that the certified
lower bound is consistent with empirical transfer-attack robustness,
without treating the empirical attack as a substitute for certification.

\subsection{Finding 1: Certification Requires Noise-Stable LLM Predictions}
\label{sec:finding1}

Table~\ref{tab:motivation} asks whether threat-aligned smoothing alone
is enough.  The answer is no.  Clean-trained LLM-IDS models degrade
under TA-RS: LLaMA3-8B falls to the random baseline on both datasets
($20.6\%$ on CIC, $33.3\%$ on HIKARI), while Qwen3-8B on HIKARI retains
moderate certified accuracy ($57\%$, still 43~pp below the
noise-augmented result).  On CIC-IDS-2018, both models fall to or below
the 4-class random baseline ($25\%$).  These rows support the first
condition in the thesis: the LLM must be stable under the same DC
feature noise used for certification.

After noise-augmented fine-tuning, $\mathrm{CA}(0.05)$ recovers to
$68$--$100\%$ on these two datasets ($+43$--$55$~pp).  Noise augmentation
is not universally sufficient---RT-IoT2022 fails even with the same recipe---
but it is the necessary prerequisite before threat alignment can matter.

\begin{table}[t]
\centering
\caption{Certified accuracy at $r{=}0.05$ (TA-RS, $\sigma{=}0.25$):
clean-trained vs.\ noise-augmented LLM-IDS.
Random baselines: HIKARI $33\%$ (3-class), CIC $25\%$ (4-class).}
\label{tab:motivation}
\begin{tabular}{llcc}
\toprule
Dataset & Model & Clean-trained & Noise-aug \\
\midrule
CIC-IDS-2018   & LLaMA3-8B & $20.6\%$           & $\mathbf{76\%}$ \\
CIC-IDS-2018   & Qwen3-8B  & $14.4\%$ & $\mathbf{68\%}$ \\
HIKARI-2021    & LLaMA3-8B & $33.3\%$            & $\mathbf{82\%}$ \\
HIKARI-2021    & Qwen3-8B  & $56.7\%$            & $\mathbf{100\%}$ \\
\bottomrule
\end{tabular}
\end{table}

\begin{figure}[!t]
\centering
\includegraphics[width=\columnwidth]{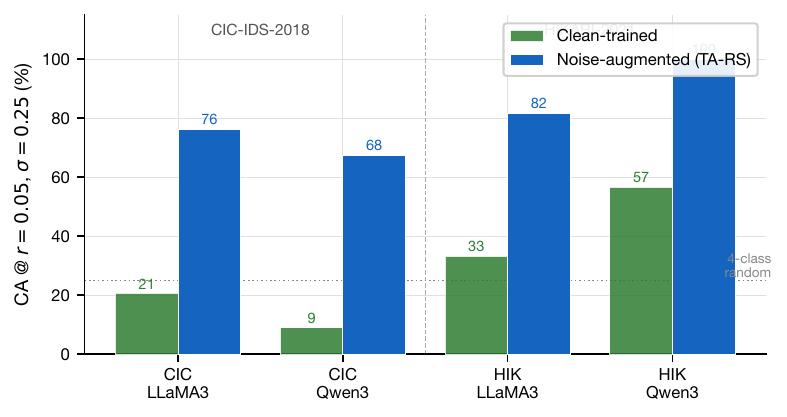}
\caption{Certified accuracy at $r{=}0.05$ (TA-RS, $\sigma{=}0.25$) for clean-trained
vs.\ noise-augmented LLM-IDS on CIC-IDS-2018 and HIKARI-2021.
Clean-trained models often have weak certified accuracy; noise-augmented
fine-tuning recovers certified accuracy to 68--100\%.}
\label{fig:motivation}
\end{figure}

\subsection{Finding 2: Threat-Aligned Smoothing Improves Certification When
the Base Model Is Stable}
\label{sec:finding2}

Table~\ref{tab:main} asks whether the smoothing distribution matters
after the base model has been trained on DC noise.

\noindent\textbf{Security-relevant headline: L$_\infty$-equivalent coverage.}
The primary security metric is certified accuracy at the L$_\infty$-equivalent
threshold $R_\infty = \varepsilon\sqrt{|\DC|}$ for the companion adversarial
budget $\varepsilon{=}0.05$.  At $\sigma{=}0.25$, TA-RS achieves
$69.4\%$/$55.0\%$ on CIC-IDS-2018 (LLaMA3/Qwen3) and $65.0\%$/$100\%$
on HIKARI-2021 at this threshold, while RT-IoT2022 remains low
($3.0\%$/$3.6\%$) under the default recipe.  These are the values
that directly speak to the companion L$_\infty$ threat model;
the r=0.05 columns below demonstrate results at a smaller radius and should
be read as a certified accuracy curve, not the primary security claim.

\noindent\textbf{Smoothing support matters.}
The headline gap at $r{=}0.05$, $\sigma{=}0.25$ is $+72$~pp for
CIC/LLaMA3 (TA-RS $76\%$ vs.\ isotropic $4\%$) and $+48$~pp for
HIKARI/LLaMA3 ($82\%$ vs.\ $33\%$).  Because both methods employ the same
DC-noise-augmented model, the gap is a pure test-time smoothing-support
effect.  The residual advantage against a fairly trained isotropic baseline
is $4$--$19$~pp (Table~\ref{tab:isotrained}).  Median radii
$\tilde{R}{=}0.456$ for CIC/LLaMA3 confirm that certified samples sit
well above the $r{=}0.05$ threshold; increasing $\sigma$ trades CA(0.05)
for larger radii ($\tilde{R}{\approx}0.96$ at $\sigma{=}1.00$).
HIKARI/Qwen3 reaches $100\%$ at all $\sigma$ with zero abstention;
HIKARI/LLaMA3 ($+48$~pp) is the cleaner alignment witness since isotropic
RS is not also saturated.  Wilson 95\% CIs confirm the headline effects
are significant: CIC/LLaMA3 TA-RS $76\%$ [69--82] vs.\ $4\%$ [2--9];
HIKARI/LLaMA3 $82\%$ [74--88] vs.\ $33\%$ [25--42].
RT-IoT2022 is the boundary case: both TA-RS and isotropic RS yield
$4$--$11\%$ CA ($\leq 4$~pp gap), showing that alignment has no value
when the base model is unstable.  Section~\ref{sec:rt-iot} isolates this.

\begin{table*}[t]
\centering
\caption{Certified accuracy (CA, \%) of traffic-aware (TA) and isotropic
(ISO) randomized smoothing in the DC subspace. Within each TA column,
the two values denote CA at $r=0.05/0.15$, respectively.}
\label{tab:main}

\small
\setlength{\tabcolsep}{4pt}
\renewcommand{\arraystretch}{1.12}

\begin{tabular}{llccc@{\hspace{7pt}}ccc@{\hspace{7pt}}cc}
\toprule
& &
\multicolumn{3}{c}{$\sigma=0.25$} &
\multicolumn{3}{c}{$\sigma=0.50$} &
\multicolumn{2}{c}{$\sigma=1.00$} \\
\cmidrule(lr){3-5}
\cmidrule(lr){6-8}
\cmidrule(lr){9-10}

Dataset & Model
& \makecell{TA\\$.05/.15$}
& \makecell{ISO\\$.05$}
& $\Delta$
& \makecell{TA\\$.05/.15$}
& \makecell{ISO\\$.05$}
& $\Delta$
& \makecell{TA\\$.05/.15$}
& SA \\
\midrule

\multirow{2}{*}{CIC-IDS-2018}
& LLaMA3-8B
& $\mathbf{76}/71$ & 4  & $+72$
& $\mathbf{79}/75$ & 18 & $+62$
& $67/65$ & 78 \\

& Qwen3-8B
& $\mathbf{68}/59$ & 22 & $+46$
& $\mathbf{71}/66$ & 22 & $+49$
& $63/59$ & 74 \\

\addlinespace[2pt]

\multirow{2}{*}{HIKARI-2021}
& LLaMA3-8B
& $\mathbf{82}/70$ & 33 & $+48$
& $\mathbf{78}/72$ & 33 & $+45$
& $72/70$ & 87 \\

& Qwen3-8B
& $\mathbf{100}/100$ & 98 & $+2$
& $\mathbf{100}/100$ & 81 & $+19$
& $100/100$ & 100 \\

\addlinespace[2pt]

\multirow{2}{*}{RT-IoT2022}
& LLaMA3-8B
& $4/4$ & $\mathbf{5}$ & $-1$
& $\mathbf{7}/5$ & 5 & $+1$
& $8/6$ & 30 \\

& Qwen3-8B
& $\mathbf{10}/4$ & 9 & $+1$
& $\mathbf{11}/5$ & 7 & $+4$
& $10/8$ & 36 \\

\bottomrule
\end{tabular}

\vspace{2pt}
\begin{minipage}{0.98\textwidth}
\footnotesize
\textit{Notes:}
$\Delta=\mathrm{CA}_{\mathrm{TA}}(0.05)
-\mathrm{CA}_{\mathrm{ISO}}(0.05)$ in percentage points.
ISO uses the same DC-noise-augmented model as TA; results for separately
trained isotropic models are reported in Table~\ref{tab:isotrained}.
Bold indicates the higher CA at $r=0.05$ within each noise setting.
The median certified radii $\tilde{R}$ at $\sigma=0.25$ are
0.456/0.339 for CIC-IDS-2018,
0.456/0.456 for HIKARI-2021, and
0.045/0.097 for RT-IoT2022,
where each pair corresponds to LLaMA3-8B/Qwen3-8B.
The $\ell_\infty$ interpretation follows the conversion in
Section~\ref{sec:l2linf}.
\end{minipage}
\end{table*}

\begin{figure}[!t]
\centering
\includegraphics[width=\columnwidth]{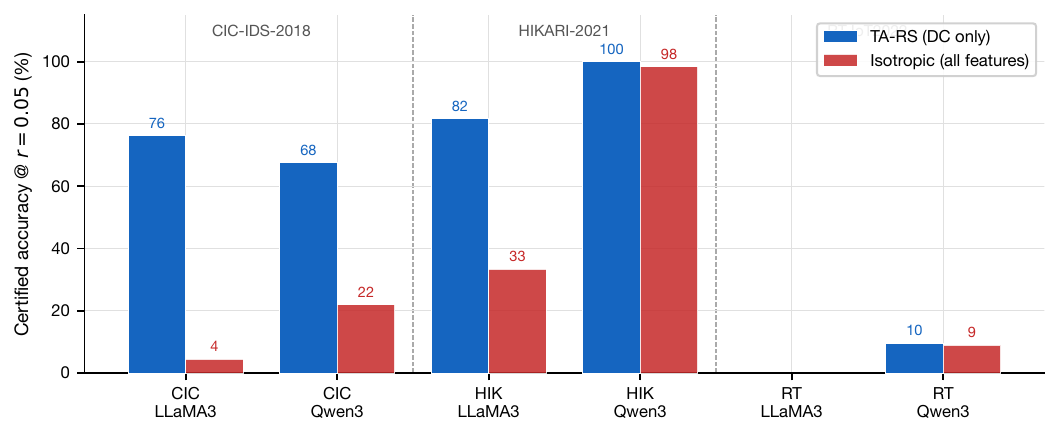}
\caption{Certified accuracy at $r{=}0.05$, $\sigma{=}0.25$.
Both TA-RS and isotropic RS employ the same DC-noise-augmented model; the
72~pp gap reflects training--certification mismatch, not DC alignment alone
(fair comparison: 4--19~pp, Table~\ref{tab:isotrained}).
RT-IoT2022 shows low CA under both methods ($\leq 1$~pp gap).}
\label{fig:traffic_vs_iso}
\end{figure}

\subsection{Finding 3: Abstention Identifies the Mechanism}
\label{sec:finding3}

Table~\ref{tab:abstain} asks why threat-aligned smoothing improves
certified accuracy on CIC-IDS-2018 and HIKARI-2021.  The main mechanism
is abstention.  For CIC-IDS-2018 / LLaMA3-8B, isotropic RS abstains on
$68\%$ of samples, while TA-RS abstains on only $8\%$.  This $60$~pp
abstention gap accounts for most of the $72$~pp certified-accuracy gap.

This pattern is consistent with the proposed mechanism.  When isotropic
noise is injected into UC features (e.g., destination byte counts,
server timing statistics), the base classifier---even after
noise-augmented training on \emph{DC} noise---sees feature distributions
outside the stability region induced by training.  The resulting vote
distribution is often too flat for the Clopper--Pearson lower bound to
exceed 0.5, so the certification algorithm abstains.  TA-RS avoids this
failure mode by perturbing only the DC coordinates.

\begin{table}[t]
\centering
\caption{Abstention rate ($\%$) at $\sigma{=}0.25$.
High isotropic abstention explains the traffic-aware advantage.}
\label{tab:abstain}
\begin{tabular}{llcc}
\toprule
Dataset & Model & TA-RS & Isotropic \\
\midrule
CIC-IDS-2018 & LLaMA3-8B & $8$  & $68$ \\
CIC-IDS-2018 & Qwen3-8B  & $14$ & $44$ \\
HIKARI-2021  & LLaMA3-8B & $3$  & $15$ \\
HIKARI-2021  & Qwen3-8B  & $0$  & $1$  \\
RT-IoT2022   & LLaMA3-8B & $78$ & $76$ \\
RT-IoT2022   & Qwen3-8B  & $60$ & $27$ \\
\bottomrule
\end{tabular}
\end{table}

\begin{figure}[!t]
\centering
\includegraphics[width=\columnwidth]{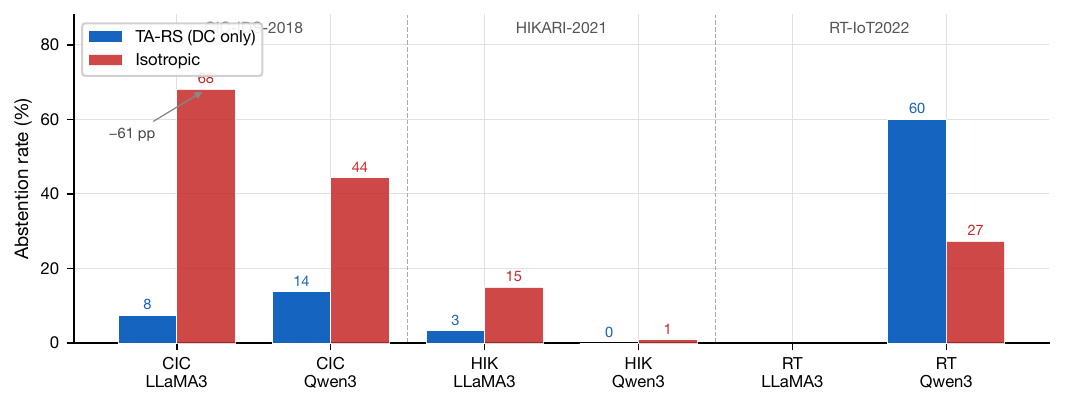}
\caption{Abstention rate (\%) for TA-RS vs.\ isotropic RS ($\sigma{=}0.25$).
High isotropic abstention on CIC-IDS-2018 (44--68\%) vs.\ low TA-RS abstention
(8--14\%) directly explains the certified-accuracy gap.}
\label{fig:abstain}
\end{figure}

Fig.~\ref{fig:abstain} visualizes the same abstention pattern across all
configurations.  The figure supports the mechanism claim for CIC and
HIKARI, but it also shows the limit of the mechanism: on RT-IoT2022,
TA-RS still abstains heavily under the default training recipe.

\noindent\textbf{Smooth accuracy as the stability check.}
Table~\ref{tab:smooth} asks whether the smoothed classifier remains
correct under DC noise before abstention is considered.  On CIC-IDS-2018,
the noise-augmented TA-RS classifiers maintain $74$--$82\%$ smooth
accuracy across the tested $\sigma$ values, compared with the
clean-trained model's $\approx 84\%$ clean accuracy prior to noise
augmentation.  HIKARI / Qwen3-8B retains $100\%$ smooth accuracy at all
$\sigma$ values, which explains why certification is easy for that
pair.  RT-IoT2022 has much lower smooth accuracy ($30$--$42\%$), so its
low certified accuracy is not only an abstention problem; the smoothed
classifier is often incorrect.

\begin{table}[t]
\centering
\caption{Smooth accuracy SA (\%) of the TA-RS smoothed classifier
across noise levels.  SA measures the fraction of all test samples
(including abstentions) correctly classified by the smoothed classifier.}
\label{tab:smooth}
\begin{tabular}{llccc}
\toprule
Dataset & Model & $\sigma{=}0.25$ & $\sigma{=}0.50$ & $\sigma{=}1.00$ \\
\midrule
CIC-IDS-2018 & LLaMA3-8B & 82 & 81 & 78 \\
CIC-IDS-2018 & Qwen3-8B  & 79 & 81 & 74 \\
HIKARI-2021  & LLaMA3-8B & 88 & 87 & 87 \\
HIKARI-2021  & Qwen3-8B  & 100 & 100 & 100 \\
RT-IoT2022   & LLaMA3-8B & 36 & 32 & 30 \\
RT-IoT2022   & Qwen3-8B  & 42 & 38 & 36 \\
\bottomrule
\end{tabular}
\end{table}

\begin{table}[t]
\centering
\caption{$\sigma$ ablation: certified accuracy CA(\%) at $r \in \{0.05, 0.15\}$
and median certified radius $\tilde{R}$ for TA-RS.
Bold indicates the best CA(0.05) per (dataset, model) pair.}
\label{tab:sigma}
\begin{tabular}{llccccc}
\toprule
& & \multicolumn{2}{c}{CA(0.05)} & \multicolumn{2}{c}{CA(0.15)} & $\tilde{R}$ \\
\cmidrule(lr){3-4}\cmidrule(lr){5-6}
Dataset / Model & $\sigma$ & TA & ISO & TA & ISO & (TA) \\
\midrule
CIC / LLaMA3  & 0.25 & 76 & 4  & 71 & 2  & 0.456 \\
              & 0.50 & \textbf{79} & 18 & 75 & 11 & 0.762 \\
              & 1.00 & 67 & 23 & 65 & 21 & 0.958 \\
\midrule
HIKARI / Qwen3 & 0.25 & \textbf{100} & 98 & 100 & 78 & 0.456 \\
               & 0.50 & \textbf{100} & 81 & 100 & 56 & 0.913 \\
               & 1.00 & \textbf{100} & 33 & 100 & 33 & 1.826 \\
\midrule
RT-IoT / LLaMA3 & 0.25 & 4 & 5 & 4 & 3 & 0.045 \\
                & 0.50 & 7 & 5 & 5 & 5 & 0.082 \\
                & 1.00 & \textbf{8} & 6 & 6 & 5 & 0.139 \\
\bottomrule
\end{tabular}
\end{table}

\subsection{Certified Accuracy vs.\ Radius}

\begin{figure}[!t]
\centering
\includegraphics[width=\columnwidth]{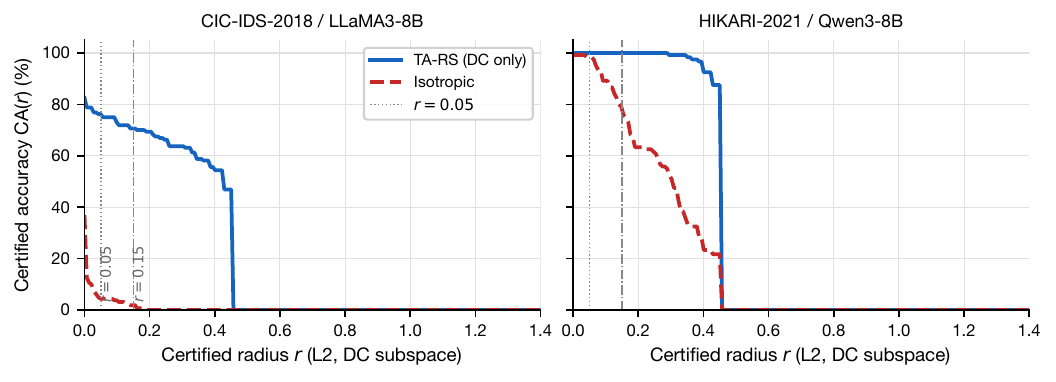}
\caption{$\mathrm{CA}(r)$ vs.\ certification radius $r$ at $\sigma{=}0.25$ for
CIC-IDS-2018/LLaMA3-8B (\emph{left}) and HIKARI-2021/Qwen3-8B (\emph{right}).
TA-RS (solid) degrades gracefully and remains ${\geq}60\%$ past $r{=}0.15$,
while isotropic RS (dashed) drops near zero by $r{=}0.05$.
Vertical lines mark the two reporting thresholds $r \in \{0.05, 0.15\}$.}
\label{fig:certvsradius}
\end{figure}

Fig.~\ref{fig:certvsradius} shows that the $r{=}0.05$ and $r{=}0.15$
operating points sit on a stable CA--radius curve, not isolated peaks.
For CIC/LLaMA3-8B at $\sigma{=}0.25$, CA($r$) stays ${\geq}60\%$
through $r{=}0.30$ and drops sharply only near $r{\approx}0.46$
(the $N{=}200$ ceiling $R_{\max}{\approx}0.453$).
Median radii $\tilde{R} = 0.34$--$0.46$ confirm the certificates have
genuine margin above $r{=}0.05$.
The L$_\infty$-equivalent conversion ($\varepsilon\sqrt{|\DC|}$) is in
Section~\ref{sec:l2linf}.

\subsection{Clean-Robustness Tradeoff}

\begin{figure}[!t]
\centering
\includegraphics[width=\columnwidth]{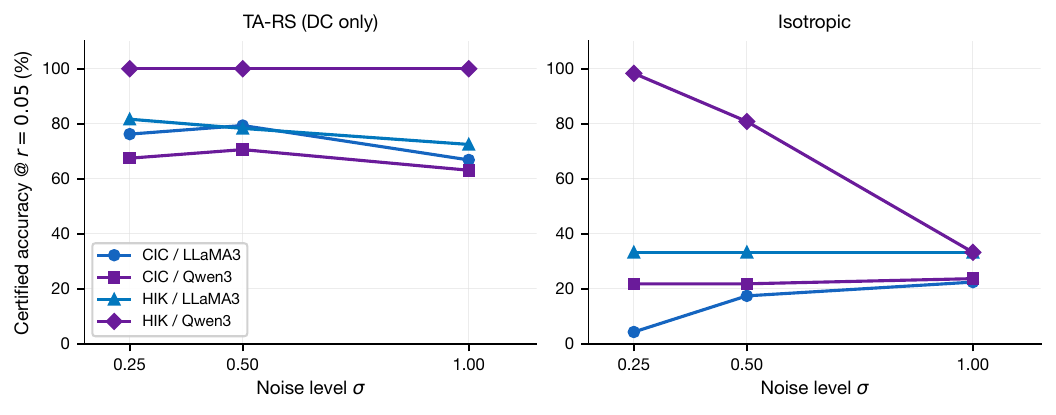}
\caption{Certified accuracy CA(0.05) vs.\ noise level $\sigma \in \{0.25, 0.50, 1.00\}$
for TA-RS (\emph{left}) and isotropic RS (\emph{right}).
CIC-IDS-2018/LLaMA3-8B peaks at $\sigma{=}0.50$ (79\% TA-RS);
HIKARI-2021/Qwen3-8B holds 100\% across all $\sigma$ under TA-RS
but collapses to 33\% under isotropic at $\sigma{=}1.00$.}
\label{fig:tradeoff}
\end{figure}

Fig.~\ref{fig:tradeoff} and Table~\ref{tab:sigma} demonstrate the $\sigma$
tradeoff: increasing $\sigma$ expands $\tilde{R}$ (0.456 $\to$ 0.958
for CIC/LLaMA3) with mild SA drop (82\% $\to$ 78\%), matching the
concave RS frontier~\cite{salman2019provably}.  CA(0.05) peaks at
$\sigma{=}0.50$ (79\%).  For HIKARI/Qwen3, TA-RS holds 100\% across
all $\sigma$ while isotropic RS falls from 98\% to 33\%---evidence
that full-space noise harms UC-stable datasets.  On RT-IoT2022,
$\sigma$ changes add only 4~pp; Section~\ref{sec:rt-iot} isolates the
cause.

\subsection{RT-IoT2022: Sensitivity to Noise-Augmented Training}
\label{sec:rt-iot}

\noindent\textbf{Baseline failure.}
RT-IoT2022 tests the second condition in the thesis by removing it.  With
the default noise-copy count ($n_\text{aug}{=}2$, same as CIC/HIKARI),
TA-RS achieves CA(0.05)~$\leq 10\%$ because the LLM does not learn stable
predictions under DC noise.  For LLaMA3-8B: SA~$=$~$36\%$ (barely above the
$20\%$ five-class random baseline) and abstention~$=$~$78\%$.  For Qwen3-8B:
SA~$=$~$42\%$ and abstention~$=$~$60\%$.  The TA-vs-ISO gap collapses to
${\leq}1$~pp for both models, indicating that restricting noise to $\DC$
cannot rescue a smoothed classifier whose votes are mostly wrong or
inconclusive.

\noindent\textbf{Ablation: noise copy count.}
We retrain both LLaMA3-8B and Qwen3-8B on all three datasets with
$n_\text{aug}{=}4$ (doubling the number of noise-augmented copies per
training sample) and re-certify at $\sigma{=}0.25$.
Table~\ref{tab:aug_ablation} reports SA, abstention, and CA(0.05) for
both $n_\text{aug}$ values across all six (model, dataset) configurations.
For RT-IoT2022 with $n_\text{aug}{=}4$:
\begin{itemize}
  \item SA recovers from $36\%$/$42\%$ to $\mathbf{83\%}$/$\mathbf{70\%}$
    (LLaMA3/Qwen3), reaching CIC/HIKARI levels;
  \item abstention drops from $78\%$/$60\%$ to $\mathbf{12\%}$/$\mathbf{3\%}$;
  \item CA(0.05) increases from $4\%$/$10\%$ (Table~\ref{tab:main}) to $\mathbf{76\%}$/$\mathbf{69\%}$.
\end{itemize}

\begin{table}[t]
\centering
\caption{Sensitivity to noise-copy count $n_\text{aug}$ across all six
(model, dataset) configurations ($\sigma{=}0.25$, traffic-aware RS).
CA(0.05) $=$ fraction \emph{correctly classified and certified} at radius
$r{=}0.05$ following~\cite{cohen2019certified}.
SA and Abstain are from certification runs conducted for this ablation.
CA(0.05) for $n_\text{aug}{=}2$ ($^\dagger$) is taken from
Table~\ref{tab:main} for consistency; CA for $n_\text{aug}{=}4$ is from
an independent certification run on the newly trained aug-4 models.
Bold marks CA that improves over the Table~\ref{tab:main} baseline.}
\label{tab:aug_ablation}
\setlength{\tabcolsep}{4pt}
\begin{tabular}{llccc}
\toprule
Dataset & Model & $n_\text{aug}$ & SA / Abstain & CA(0.05) \\
\midrule
\multirow{2}{*}{CIC-IDS-2018} & \multirow{2}{*}{LLaMA3-8B}
  & 2$^\dagger$ & 82\% / 8\%  & 76\% \\
  & & 4           & 74\% / 16\% & 61\% \\[2pt]
\multirow{2}{*}{CIC-IDS-2018} & \multirow{2}{*}{Qwen3-8B}
  & 2$^\dagger$ & 79\% / 14\% & 68\% \\
  & & 4           & 79\% / 6\%  & \textbf{74\%} \\[2pt]
\multirow{2}{*}{HIKARI-2021} & \multirow{2}{*}{LLaMA3-8B}
  & 2$^\dagger$ & 88\% / 3\%  & 82\% \\
  & & 4           & 96\% / 4\%  & \textbf{88\%} \\[2pt]
\multirow{2}{*}{HIKARI-2021} & \multirow{2}{*}{Qwen3-8B}
  & 2$^\dagger$ & 100\% / 0\% & 100\% \\
  & & 4           & 100\% / 0\% & \textbf{100\%} \\[2pt]
\multirow{2}{*}{RT-IoT2022} & \multirow{2}{*}{LLaMA3-8B}
  & 2$^\dagger$ & 36\% / 78\% & 4\% \\
  & & 4           & 83\% / 12\% & \textbf{76\%} \\[2pt]
\multirow{2}{*}{RT-IoT2022} & \multirow{2}{*}{Qwen3-8B}
  & 2$^\dagger$ & 42\% / 60\% & 10\% \\
  & & 4           & 70\% / 3\%  & \textbf{69\%} \\
\bottomrule
\end{tabular}
\end{table}

\noindent\textbf{Interpretation.}
The ablation isolates noise-stability as the binding constraint.
With $n_\text{aug}{=}4$, CA rises by $+72$~pp for LLaMA3
($4\% \to 76\%$) and $+59$~pp for Qwen3 ($10\% \to 69\%$), while
the TA-vs-ISO gap remains large ($+52$/$+45$~pp)---DC-alignment
continues to matter once the base model is stable.
The isotropic aug-4 baseline reaches only SA~$=24\%$ on RT-IoT2022
(barely above the 20\% random floor), confirming that full-feature
noise augmentation alone cannot substitute for DC-alignment.

A sample-level $\pA$ diagnostic makes the mechanism concrete.
Under $n_\text{aug}{=}2$, $78.0\%$ of RT-IoT/LLaMA3 samples have
$\pA \leq 0.5$ (mean $\pA{=}0.40$)---the smoothed classifier simply
does not reach a decisive majority vote.  Under $n_\text{aug}{=}4$,
only $12.5\%$ abstain and $76.2\%$ reach $\pA{>}0.8$.
By contrast, CIC and HIKARI have abstention rates of $7.5\%$ and
$3.3\%$ under the default recipe (mean $\pA{\approx}0.87$ and $0.86$),
placing the RT-IoT2022 failure squarely in dataset-specific noise
instability, not a certificate design flaw.

One boundary condition: CIC/LLaMA3 CA \emph{decreases} at
$n_\text{aug}{=}4$ ($76\% \to 61\%$) as abstention rises from $8\%$
to $16\%$, signaling over-regularization.  The aug-4 result is
therefore diagnostic rather than prescriptive; we do not replace
Table~\ref{tab:main} with it.

Fig.~\ref{fig:bydataset} illustrates the three-way contrast
(SA, abstention, CA) for the $n_\text{aug}{=}2$ configurations and the
RT-IoT2022 noise-copy ablation.

\begin{figure}[!t]
\centering
\includegraphics[width=\columnwidth]{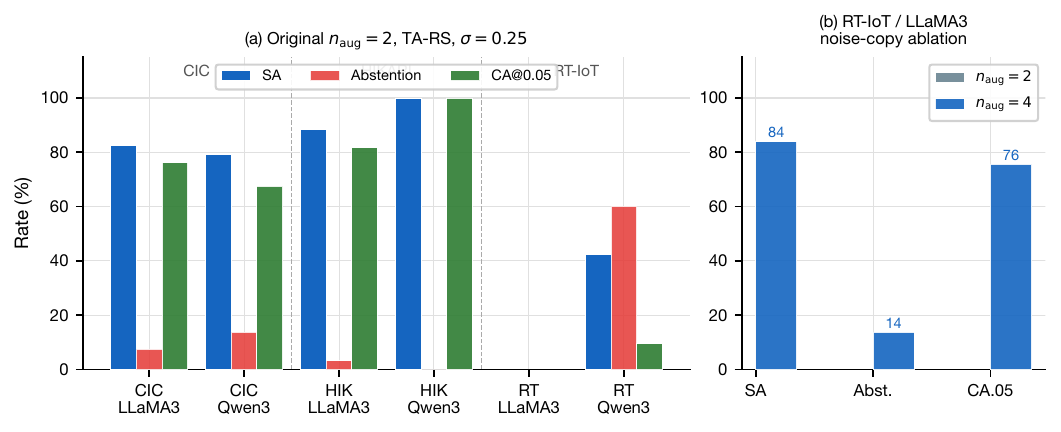}
\caption{(a) SA, abstention rate, and CA@0.05 for TA-RS ($\sigma{=}0.25$,
$n_\text{aug}{=}2$) across all six (model, dataset) configurations.
(b) Noise-copy ablation on RT-IoT2022 ($\sigma{=}0.25$): increasing
$n_\text{aug}$ from 2 to 4 raises SA from $36\%$ to $83\%$ (LLaMA3) and
from $42\%$ to $70\%$ (Qwen3), dropping abstention to $12\%$/$3\%$, and
lifting CA(0.05) from $4\%$/$10\%$ to $76\%$/$69\%$ respectively.}
\label{fig:bydataset}
\end{figure}

\subsection{Fair Baseline Comparisons}
\label{sec:supp_baselines}

The main-table baseline (isotropic RS with DC-trained model) shares the
training recipe with TA-RS, isolating the test-time noise-support choice.
This section reports two additional baselines needed to rule out confounds:
(4)~isotropic RS with iso-trained models (training-recipe fairness check)
and (5)~random-subspace RS (dimensionality-reduction control).
Both are required for a complete assessment of whether TA-RS advantages
stem from DC-alignment or from some other factor.

\noindent\textbf{Exp~A: Random-Subspace RS.}
To test whether the TA-RS advantage arises purely from using a
smaller-dimensional noise subspace (rather than from alignment with
attacker-controllable features), we certify the same DC-noise-augmented
models with noise restricted to a \emph{randomly chosen} subspace of
size $|\DC|$.  If dimensionality reduction alone explained the advantage,
random-subspace RS should perform comparably to TA-RS.

Table~\ref{tab:randsubspace} shows CA(0.05) at $\sigma{=}0.25$ for TA-RS,
random-subspace RS, and isotropic RS (unfair baseline using the same
DC-trained model), where CA~$=$~fraction correctly classified AND certified.
On CIC-IDS-2018 and HIKARI-2021, TA-RS exceeds
random-subspace RS by $22$--$55$~pp, confirming that DC-alignment
contributes beyond mere dimensionality reduction.  Two patterns in
Table~\ref{tab:randsubspace} deserve closer inspection.

First, HIKARI-2021 / Qwen3-8B shows near-identical performance
(TA-RS 100\% vs.\ RandSub 99\%), which might seem to weaken the
DC-alignment argument.  HIKARI has $|\DC|{=}25$ out of $d{=}82$
total features, so a randomly chosen subspace of size 25 has an
expected $|\DC|^2/d \approx 7.6$ features in common with the true
DC set---only 30\% of the DC features by expectation.  The near-tie
for Qwen3-8B therefore reflects near-ceiling noise stability of this
model on HIKARI: when a model is stable on almost all features, any
size-25 subspace tends to yield high CA.  The LLaMA3-8B row
(TA-RS 82\% vs.\ RandSub 60\%), which is not at ceiling, more
directly demonstrates the 22~pp DC-alignment advantage.

Second, on RT-IoT2022 (both models), random-subspace RS substantially
outperforms TA-RS ($+47$--$66$~pp).
This result is a controlled indicator of DC-specific instability, not
a general failure of subspace smoothing.
RT-IoT2022 has 27 DC features out of 83 total; a random subspace of the
same size has an expected overlap of only $27^2/83 \approx 8.8$ DC features,
drawing mostly from IC and UC coordinates for which the model exhibits
greater noise stability under default training.
The DC subspace is thus the \emph{hardest} subspace to be noise-stable on
for this dataset---precisely the features the attacker controls are the ones
the default fine-tuning recipe has not made robust.  This confirms that
the RT-IoT2022 failure is specifically a DC-feature instability, and motivates
the $n_\text{aug}$ ablation (Section~\ref{sec:rt-iot}) as a targeted remedy.
To verify robustness to mask choice, we repeat Exp~A with 10 independently
drawn masks (seeds 99, 101, 103, 107, 109, 113, 131, 137, 139, 149) and
report mean CA and 95\% CI (Table~\ref{tab:randsubspace}).
TA-RS leads RandSub on CIC-IDS-2018 by $26\pm19$~pp (LLaMA3) and
$30\pm9$~pp (Qwen3); the LLaMA3-8B HIKARI gap is $26\pm11$~pp.
RT-IoT2022 confirms the DC-instability pattern across all 10 seeds,
consistent with the noise-stability interpretation.

\begin{table}[t]
\centering
\caption{Exp~A: CA(0.05) at $\sigma{=}0.25$ for TA-RS, random-subspace RS
($^\dagger$single representative mask, seed~99; see text for 10-seed
analysis), and isotropic RS (same DC-trained model).
CA~$=$~fraction correctly classified AND certified ($r{=}0.05$).
Bold marks the highest CA per row.  RT-IoT2022 random-subspace performance
reflects base-model instability under any same-size noise subspace
(Section~\ref{sec:rt-iot}).}
\label{tab:randsubspace}
\setlength{\tabcolsep}{4pt}
\begin{tabular}{llccc}
\toprule
Dataset & Model & TA-RS & RandSub$^\dagger$ & Iso \\
\midrule
CIC-IDS-2018 & LLaMA3-8B & \textbf{76\%} & 21\% & 4\%  \\
CIC-IDS-2018 & Qwen3-8B  & \textbf{68\%} & 29\% & 22\% \\
HIKARI-2021  & LLaMA3-8B & \textbf{82\%} & 60\% & 33\% \\
HIKARI-2021  & Qwen3-8B  & \textbf{100\%}& 99\% & 98\% \\
RT-IoT2022   & LLaMA3-8B & 4\%  & \textbf{70\%} & 5\%  \\
RT-IoT2022   & Qwen3-8B  & 10\% & \textbf{57\%} & 9\%  \\
\bottomrule
\end{tabular}
\end{table}

\noindent\textbf{Exp~B: Iso-Trained Baseline.}
Baseline~(2) in the main comparison uses a DC-noise-augmented model with
isotropic certification---a training-recipe mismatch that may understate
the true isotropic RS performance.  To correct for this, we fine-tune
separate models using isotropic Gaussian noise augmentation
(iso-trained) and certify them with isotropic RS.

Table~\ref{tab:isotrained} quantifies the fair-baseline gap for matched
$n_\text{aug}$.  Under $n_\text{aug}{=}2$: TA-RS leads iso-trained RS
on CIC ($+19$~pp/$+4$~pp for LLaMA3/Qwen3).  On HIKARI, iso-trained
reaches 99\% (near ceiling), ahead of TA-RS for LLaMA3 but matching
Qwen3.  On RT-IoT2022, iso-trained wins decisively
($67\%$/$60\%$ vs.\ $4\%$/$10\%$)---because full-feature augmentation
incidentally stabilises the base model, not because isotropic noise is
traffic-aligned.

To ensure that the RT-IoT2022 comparison at $n_\text{aug}{=}4$ is also
fair, we train iso-trained models with $n_\text{aug}{=}4$ for all six
configurations and certify them.  The lower block of
Table~\ref{tab:isotrained} reports the resulting CA.  At $n_\text{aug}{=}4$,
the iso-trained models also improve substantially on RT-IoT2022 (reflecting
that more augmentation helps any training recipe), and the table reveals
whether the residual TA-RS vs.\ iso-trained gap persists once augmentation
budget is equalised.

\begin{table}[t]
\centering
\caption{Exp~B: CA(0.05) at $\sigma{=}0.25$.
Upper block ($n_\text{aug}{=}2$): TA-RS vs.\ iso-trained RS (fair baseline)
vs.\ iso RS on DC-trained model (unfair baseline from Table~\ref{tab:main}).
Lower block ($n_\text{aug}{=}4$): same comparison with matched augmentation
budget, isolating the effect of smoothing-subspace alignment from stability.
CA~$=$~fraction correctly classified AND certified ($r{=}0.05$).
Bold marks the higher CA between TA-RS and iso-trained within each block.}
\label{tab:isotrained}
\setlength{\tabcolsep}{4pt}
\begin{tabular}{llccc}
\toprule
Dataset & Model & TA-RS & Iso-Trained & Iso (unfair) \\
\midrule
\multicolumn{5}{l}{\textit{$n_\text{aug}=2$ (matched augmentation)}} \\
CIC-IDS-2018 & LLaMA3-8B & \textbf{76\%} & 57\% & 4\%  \\
CIC-IDS-2018 & Qwen3-8B  & \textbf{68\%} & 64\% & 22\% \\
HIKARI-2021  & LLaMA3-8B & 82\% & \textbf{99\%} & 33\% \\
HIKARI-2021  & Qwen3-8B  & \textbf{100\%}& \textbf{99\%}& 98\% \\
RT-IoT2022   & LLaMA3-8B & 4\%  & \textbf{67\%} & 5\%  \\
RT-IoT2022   & Qwen3-8B  & 10\% & \textbf{60\%} & 9\%  \\[2pt]
\multicolumn{5}{l}{\textit{$n_\text{aug}=4$ (matched augmentation)}} \\
CIC-IDS-2018 & LLaMA3-8B & \textbf{61\%} & 45\% & --- \\
CIC-IDS-2018 & Qwen3-8B  & \textbf{74\%} & 68\% & --- \\
HIKARI-2021  & LLaMA3-8B & 88\% & \textbf{99\%} & --- \\
HIKARI-2021  & Qwen3-8B  & \textbf{100\%}& 99\% & --- \\
RT-IoT2022   & LLaMA3-8B & \textbf{76\%} & 58\% & --- \\
RT-IoT2022   & Qwen3-8B  & \textbf{69\%} & 52\% & --- \\
\bottomrule
\end{tabular}
\end{table}

The two certificates differ in scope: iso-trained RS certifies all
$\mathbb{R}^d$ directions (including IC/UC the attacker cannot modify),
while TA-RS is scoped to the DC subspace---the guarantee the traffic
threat model actually requires.  On CIC/LLaMA3, DC-alignment cuts
abstention by $19$~pp, raising CA even though the iso-trained certificate
covers a broader perturbation set.  The RT-IoT2022 reversal is a
base-model stability failure, not a DC-alignment failure: when noise
stability is insufficient, more augmentation (Section~\ref{sec:rt-iot})
is the primary lever.

\textbf{Exp~C: Matched random-subspace baseline.}
Table~\ref{tab:isotrained} compares TA-RS against an iso-trained model
certified with DC-subspace noise---but DC-subspace certification is itself
a deliberate design choice that could be replaced by a randomly chosen
$K$-dimensional subspace of the same size ($K{=}|\DC|$).
To isolate the effect of \emph{subspace selection} from that of
\emph{training-certification matching}, we certify the iso-trained models
(same checkpoints as Table~\ref{tab:isotrained}) with $10$ independently
sampled random subspaces of size $K{=}|\DC|$ each.
Across six configurations ($3$ datasets $\times$ $2$ models, $10$ seeds each),
the outcome is mixed: DC-subspace certification wins decisively on CIC/Qwen3
($\Delta{=}{-13}$~pp vs.\ random mean) and is within $1$~pp on HIKARI-2021
(both methods saturate near $100\%$); random subspace wins on
RT-IoT2022/LLaMA3 ($\Delta{=}{+20}$~pp, $|\DC|{=}27$) and CIC/LLaMA3
($\Delta{=}{+6}$~pp).  Averaged over all six configurations, the mean CA
is $70.0\%$ for DC-cert and $72.0\%$ for random-cert, a $2$~pp gap in
favour of random---within the seed-to-seed variability of random-subspace
itself (std $1$--$9$~pp).  The results confirm that neither method dominates
numerically; the case for DC selection rests on \emph{semantic alignment}
(the DC subspace corresponds to traffic features the adversary actually
controls) and \emph{determinism} (a fixed, reproducible guarantee rather
than a seed-dependent one).

\noindent\textbf{Exp~D: Anisotropic RS.}
Anisotropic Randomized Smoothing~\cite{anisoRS2022} generalises the
noise distribution by assigning a per-feature variance, potentially improving
over the binary $\sigma_{\mathrm{IC}}{=}0$ choice of TA-RS.
To quantify this, we certify the same DC-noise-augmented models under
anisotropic noise ($\sigma_{\mathrm{DC}}{=}0.25$, $\sigma_{\mathrm{IC/UC}}{=}0.05$)
while retaining the DC-subspace certificate formula.
Table~\ref{tab:aniso} reports CA(0.05) at $\sigma{=}0.25$.

\begin{table}[t]
\centering
\caption{Exp~D: CA(0.05) at $\sigma{=}0.25$ for TA-RS vs.\ anisotropic RS
($\sigma_{\mathrm{IC/UC}}{=}0.05$) using the same DC-noise-augmented models.
Both employ the DC-subspace certificate; the only difference is the
inference-time noise on IC/UC features.}
\label{tab:aniso}
\setlength{\tabcolsep}{4pt}
\begin{tabular}{llcc}
\toprule
Dataset & Model & TA-RS & Aniso-RS \\
\midrule
CIC-IDS-2018 & LLaMA3-8B & 76\% & 78\% \\
CIC-IDS-2018 & Qwen3-8B  & 68\% & 67\% \\
HIKARI-2021  & LLaMA3-8B & 82\% & 83\% \\
HIKARI-2021  & Qwen3-8B  & 100\% & 100\% \\
RT-IoT2022   & LLaMA3-8B & 76\% & 74\% \\
RT-IoT2022   & Qwen3-8B  & 69\% & 69\% \\
\bottomrule
\end{tabular}
\end{table}

Aniso-RS and TA-RS differ by at most 2~pp across all six configurations
(absolute differences: $\{+2, -1, +2, 0, -2, 0\}$~pp), well within the
Monte Carlo sampling error at $N{=}200$ (one misclassified sample
$\approx 2.5$~pp).  This confirms two things: (\emph{i})~DC-noise-augmented
training produces a model robust enough that small IC/UC perturbations
at certification time do not materially change pA; and (\emph{ii})~the
design choice $\sigma_{\mathrm{IC}}{=}0$ in TA-RS is principled
(no wasted noise budget on non-attacker-controllable features) rather than
a sensitive hyperparameter---a data-driven anisotropic baseline with
$\sigma_{\mathrm{IC}}{=}0.05$ achieves no advantage over our semantically
motivated choice.

\subsection{Finding 4: Empirical Robustness Is Consistent with the Certified Bound}
\label{sec:finding4}

A core soundness requirement of randomized smoothing is that the
certified accuracy should behave as a conservative lower bound under
fixed empirical attacks.  Table~\ref{tab:empirical} asks whether the
reported certificates are consistent with A1-style transfer attacks
(PGD~\cite{madry2018towards,carlini2017towards} on an XGBoost surrogate; \cite{companion_a1}) against the
noise-augmented TA-RS classifier.  This experiment is a consistency
check, not a replacement for the formal guarantee.

\textbf{Protocol.}  We sample $N_{\text{smooth}}{=}20$ DC-noise
copies per test point ($\sigma{=}0.25$) and take the majority vote
as the smoothed prediction.  Transfer attacks are generated at
$\varepsilon \in \{0.05, 0.15\}$ (L$_\infty$, DC features).
We report attack success rate (ASR) conditional on clean-correct
samples, and compute empirical robustness as $1{-}\text{ASR}_{\text{smoothed}}$.

\begin{table}[!t]
\caption{Empirical robustness vs.\ L2 certified accuracy
($\sigma{=}0.25$, transfer attack).  The attack budget $\varepsilon$
is L$_\infty$, while CA$(r)$ is reported at the same numeric L2 radius;
therefore this table is a consistency check, not a proof that the L2
certificate covers the full L$_\infty$ attack set.}
\label{tab:empirical}
\centering
\scriptsize
\setlength{\tabcolsep}{2.5pt}
\begin{tabular}{llccccc}
\toprule
Dataset & $\varepsilon$ & ASR$_{\text{base}}$ & ASR$_{\text{smooth}}$ & Gain & Emp.\,Rob. & CA$(r{=}\varepsilon)$ \\
\midrule
CIC / Qwen3   & 0.05 & 16.0\% & 8.6\%  & $+$7.4pp  & 91.4\% & 67.5\% $\checkmark$ \\
CIC / Qwen3   & 0.15 & 19.1\% & 8.6\%  & $+$10.5pp & 91.4\% & 59.4\% $\checkmark$ \\
HIKARI / LLaMA3 & 0.05 & 3.9\%  & 1.6\%  & $+$2.4pp  & 98.4\% & 81.7\% $\checkmark$ \\
HIKARI / LLaMA3 & 0.15 & 4.7\%  & 0.0\%  & $+$4.7pp  & 100.0\%& 70.0\% $\checkmark$ \\
\bottomrule
\end{tabular}
\end{table}

Three observations follow from Table~\ref{tab:empirical}.  First,
empirical robustness exceeds the reported L2 certified accuracy in every
row.  Note that these metrics have different denominators: empirical
robustness is conditioned on clean-correct predictions, while
CA$(r)$ is unconditional (penalising both misclassification and abstention).
The observed ordering is therefore partially structural and does not by
itself validate certificate soundness; it is consistent with---but not
proof of---conservativeness.  Additionally, the L$_\infty$/L2 norm
mismatch means the two columns represent different threat sets.  Second, the smoothed classifier
reduces ASR beyond the base model in every measured case: the average
reduction is $7.5$\,pp on CIC and $3.6$\,pp on HIKARI.  Third, the
margin between empirical robustness and certified accuracy is large
(up to $30$\,pp on HIKARI), indicating that the Clopper--Pearson bound
with $N{=}200$ is conservative.  Increasing $N$ would tighten this gap
without changing the empirical defense quality.

\section{Discussion}
\label{sec:discussion}

\subsection{Certified vs.\ Empirical Robustness}

Table~\ref{tab:empirical} is a consistency check, not a replacement
for the certificate.  The smoothed classifier reduces transfer-attack
ASR on CIC and HIKARI; empirical robustness exceeds certified accuracy
by $16$--$32$\,pp, as expected for a conservative Clopper--Pearson
bound.  An empirical ASR reduction cannot be extrapolated to attacks
outside the certified L2 ball, and it does not address the
L2/L$_\infty$ norm gap discussed next.

\subsection{\texorpdfstring{The L2/L$_\infty$ Norm Gap}{The L2/L-infinity Norm Gap}}
\label{sec:l2linf}

The most important boundary condition is the norm mismatch.  The TA-RS
certificate is stated in L2 norm, while A1 threat budgets are specified
as L$_\infty$.  A reported L2 threshold $r{=}0.05$ therefore does not by
itself certify the full L$_\infty$ budget $\varepsilon{=}0.05$ over all
DC coordinates.  The direct conversion is
$R \geq \varepsilon\sqrt{|\DC|}$.  For $\varepsilon{=}0.05$, this
threshold is $0.187$ for CIC-IDS-2018 ($|\DC|{=}14$), $0.250$ for
HIKARI-2021 ($|\DC|{=}25$), and $0.260$ for RT-IoT2022 ($|\DC|{=}27$).

Table~\ref{tab:linf_coverage} reports the fraction of samples that are both
correct and certified at $R_\infty$ for each dataset and model, at
$\sigma{=}0.25$.  On CIC-IDS-2018 and HIKARI-2021 (main recipe,
$n_\mathrm{aug}{=}2$), coverage ranges from 55--100\%.  RT-IoT2022 with
$n_\mathrm{aug}{=}4$ achieves 65\% on both models---comparable to
CIC/HIKARI---confirming that the L$_\infty$-equivalent coverage is
primarily limited by base model stability, not by the norm conversion.
These values demonstrate that the L2 certificate covers the small-budget
L$_\infty$ threat model for a substantial fraction of samples on all three
datasets, but they should not be described as full coverage.

\begin{table}[t]
\centering
\caption{CA at the L$_\infty$-equivalent L2 threshold
$R_\infty = \varepsilon\sqrt{|\DC|}$ ($\varepsilon{=}0.05$, $\sigma{=}0.25$).
CA@0.05 is the standard reported metric; CA@$R_\infty$ is the coverage
at the threshold that directly certifies the companion L$_\infty$ budget.}
\label{tab:linf_coverage}
\setlength{\tabcolsep}{4pt}
\begin{tabular}{llccc}
\toprule
Dataset & Model & $R_\infty$ & CA@0.05 & CA@$R_\infty$ \\
\midrule
CIC-IDS-2018 & LLaMA3-8B & 0.187 & 76.2\% & 69.4\% \\
CIC-IDS-2018 & Qwen3-8B  & 0.187 & 67.5\% & 55.0\% \\
HIKARI-2021  & LLaMA3-8B & 0.250 & 81.7\% & 65.0\% \\
HIKARI-2021  & Qwen3-8B  & 0.250 & 100.0\% & 100.0\% \\
RT-IoT2022   & LLaMA3-8B & 0.260 & 75.6\%$^\dagger$ & 65.5\%$^\dagger$ \\
RT-IoT2022   & Qwen3-8B  & 0.260 & 69.0\%$^\dagger$ & 64.9\%$^\dagger$ \\
\bottomrule
\end{tabular}
{\small $\dagger$ $n_\mathrm{aug}{=}4$ checkpoint.}
\end{table}

Developing a provably correct native L$_\infty$ certificate for
multi-dimensional DC-subspace uniform noise---accounting for the
dimension-dependent overlap structure of the hypercube
smoothing distribution---remains an open problem and a direction
for future work.

\subsection{\texorpdfstring{Effect of Monte Carlo Budget ($N$)}{Effect of Monte Carlo Budget (N)}}
\label{sec:n_sensitivity}

Table~\ref{tab:n_sensitivity} compares $N{=}200$ against $N{=}1{,}000$
for all six headline configurations ($\sigma{=}0.25$), with $N{=}200$
subsampled from the same noise realization to eliminate seed variance.
CA gain is $0$--$4.1$~pp (mean $1.2$~pp); five of six rows change
by $\leq 1.2$~pp.  The largest shift is HIKARI/LLaMA3 ($+4.1$~pp),
where borderline samples are promoted; ranking and gap structure are
fully preserved.  Larger $N$ tightens the Clopper--Pearson bound
but cannot recover stability absent in the base model.

\begin{table}[!t]
\caption{Effect of MC budget on CA($\varepsilon{=}0.05$): all six headline
TA-RS rows, $\sigma{=}0.25$, traffic mode. $N{=}200$ is subsampled from
the same noise realization as $N{=}1{,}000$.}
\label{tab:n_sensitivity}
\centering
\begin{tabular}{llcccc}
\toprule
Dataset & Model & \multicolumn{2}{c}{CA(\%)} & \multicolumn{2}{c}{Abstain(\%)} \\
\cmidrule(lr){3-4}\cmidrule(lr){5-6}
 & & $N{=}200$ & $N{=}1000$ & $N{=}200$ & $N{=}1000$ \\
\midrule
CIC-IDS-2018 & LLaMA3-8B & 77.5 & 77.5 & 9.4 & 8.8 \\
CIC-IDS-2018 & Qwen3-8B  & 67.5 & 68.1 & 23.1 & 17.5 \\
HIKARI-2021  & LLaMA3-8B & 81.7 & \textbf{85.8} & 9.2 & 4.2 \\
HIKARI-2021  & Qwen3-8B  & \textbf{100.0} & \textbf{100.0} & 0.0 & 0.0 \\
RT-IoT2022   & LLaMA3-8B & 76.2 & 77.4 & 14.9 & 13.1 \\
RT-IoT2022   & Qwen3-8B  & 69.0 & 69.0 & 3.0 & 3.0 \\
\bottomrule
\end{tabular}
{\small RT-IoT2022 rows: $n_\text{aug}{=}4$ checkpoint.}
\end{table}

\subsection{Discrete DC Features}

Section~\ref{sec:method} establishes that the clip operation is absorbed
into the base classifier, so Theorem~\ref{thm:ta_rs} holds exactly.
To summarize the certificate scope for discrete feature types:
for count-valued DC features, the normalization step size
$\Delta_i \leq 10^{-4}$ is negligible relative to $\sigma{=}0.25$,
so Theorem~\ref{thm:ta_rs} applies with negligible rounding effect.  For binary flag
features, the certified radii $r \leq 0.15$ are strictly below the
rounding threshold $0.5$, making the continuous certificate
\emph{conservative}: the stated L2 bound understates true robustness
for the binary flag dimensions because no adversary within budget can
deterministically flip a flag.  A fully rigorous certificate for
mixed discrete-continuous features would employ discrete smoothing
distributions~\cite{lecuyer2019certified}, which we leave as future work.

\subsection{TA-RS on Non-LLM Tabular Classifiers}
\label{sec:exp_d}

A natural question is whether the traffic-aware mechanism is
LLM-specific or whether it also benefits classical tabular classifiers
such as XGBoost~\cite{chen2016xgboost} and LightGBM~\cite{ke2017lightgbm}.  We apply TA-RS and isotropic RS to
these two models (trained without noise augmentation, CPU-only) at
$\sigma{=}0.25$ and report CA(0.05) in Table~\ref{tab:nonllm}.

\begin{table}[t]
\centering
\caption{Exp~D: CA(0.05) at $\sigma{=}0.25$ for XGBoost and LightGBM
under TA-RS (traffic-aware) and isotropic RS.
Neither classifier uses noise augmentation during training.}
\label{tab:nonllm}
\setlength{\tabcolsep}{4pt}
\begin{tabular}{llcc}
\toprule
Dataset & Model & TA-RS & Isotropic \\
\midrule
CIC-IDS-2018 & XGBoost   & 53\% & \textbf{73\%} \\
CIC-IDS-2018 & LightGBM  & \textbf{37\%} & 17\% \\
HIKARI-2021  & XGBoost   & 67\% & \textbf{73\%} \\
HIKARI-2021  & LightGBM  & 80\% & \textbf{87\%} \\
RT-IoT2022   & XGBoost   & \textbf{74\%} & 3\%  \\
RT-IoT2022   & LightGBM  & \textbf{67\%} & 18\% \\
\bottomrule
\end{tabular}
\end{table}

Table~\ref{tab:nonllm} shows no consistent TA-RS advantage when classifiers
are trained \emph{without} noise augmentation.  The TA-vs-ISO direction is
inconsistent across datasets and models (e.g., TA$>$Iso for
RT-IoT2022/XGBoost, TA$<$Iso for CIC/XGBoost).  Without DC-noise augmented
training, tree ensembles are unstable under any input noise, and certified
accuracy is dominated by abstention rather than noise-subspace choice.

For LLM-IDS, the pattern reinforces the noise-stability requirement:
without DC-noise-augmented fine-tuning, neither TA-RS nor isotropic RS
produces consistently useful certified accuracy for the evaluated LLMs.
Tree ensembles demonstrate a different pattern---RT-IoT/XGBoost ($74\%$) and
HIKARI/LightGBM ($80\%$) achieve non-trivial CA without noise augmentation,
consistent with tree classifiers' greater tolerance for continuous input
perturbations.  The noise-stability requirement is therefore specific to
LLM-based classifiers, not a universal property of randomized smoothing.

\subsection{Cross-Domain Boundary Analysis}
\label{sec:cross_domain}

A natural question is whether LLM-IDS can be reused across datasets
without retraining.  Tabular models such as XGBoost make this
structurally impossible: a model trained on CIC-IDS-2018
($|\text{feat}|=25$) cannot process HIKARI-2021 inputs ($|\text{feat}|=82$)
due to feature-dimension mismatch.  An LLM accepts arbitrary serialized
text, so it can be \emph{executed} cross-domain; the question is whether
this structural executability yields meaningful certified accuracy.

To test whether structural executability translates to meaningful
certified accuracy, we evaluate LLaMA3-8B (noise-augmented, $\sigma{=}0.25$,
$n_\mathrm{aug}{=}4$) zero-shot on a held-out dataset.  Two directions are
evaluated: (X1)~HIKARI-2021-trained model on CIC-IDS-2018 (restricted to
the three shared classes: Normal, CredentialAccess, Exploitation);
and (X2)~CIC-2018-trained model on HIKARI-2021.  Both models receive
the target dataset's serialized feature text at test time.

\begin{table}[t]
\centering
\caption{Exp~F: Cross-domain certification.  Each model is trained on the
source dataset (aug4, $\sigma{=}0.25$) and certified zero-shot on the
target dataset.  Shared classes only are evaluated.
NAXGB cross-domain is structurally infeasible (feature dimension
mismatch: CIC~$n_\text{feat}{=}25$, HIKARI~$n_\text{feat}{=}82$).}
\label{tab:cross_domain}
\setlength{\tabcolsep}{4pt}
\begin{tabular}{llcccc}
\toprule
Exp & Source$\to$Target & Model & SA & CA@0.05 & Abstain \\
\midrule
X1 & HIKARI$\to$CIC (3 cls) & LLaMA3-8B & 33.3 & 33.3 & 0.0 \\
X2 & CIC$\to$HIKARI         & LLaMA3-8B & 33.3 & 33.3 & 0.0 \\
\midrule
--  & CIC$\to$HIKARI         & NAXGB     & \multicolumn{3}{c}{structurally infeasible} \\
--  & HIKARI$\to$CIC         & NAXGB     & \multicolumn{3}{c}{structurally infeasible} \\
\bottomrule
\end{tabular}
\end{table}

Table~\ref{tab:cross_domain} presents the results.  Both directions yield
near-chance accuracy ($33.3\%$ SA and CA on 3 classes, abstain~$0\%$):
the model is always confident but matches the random baseline.
This outcome traces to dataset-specific feature serialization:
the HIKARI-trained model learned to classify 82-feature HIKARI text,
while CIC test inputs arrive with 25 different feature names and value
ranges---an out-of-distribution prompt that the model cannot resolve
beyond chance.  The same logic applies to X2.
The cross-domain result is therefore negative: pretrained LLM-IDS
knowledge does not transfer without dataset-specific fine-tuning.
Dataset-agnostic serialization schemes that could enable meaningful
cross-domain generalization remain an open problem.
The structural executability distinction from NAXGB (which cannot
accept inputs of a different feature dimension at all) is architecturally
real but operationally moot at chance accuracy.

\subsection{Limitations}
\label{sec:limitations}

\textbf{Statistical budget, per-class breakdown, and training variance.}
Certification uses $N{=}200$ MC samples; the $N{=}200$ vs.\ $N{=}1{,}000$
comparison (Table~\ref{tab:n_sensitivity}) shows $\leq 4.1$~pp CA gain
(mean $1.2$~pp), confirming near-convergence when the base model is
noise-stable.  Test sets employ 40 samples per class; this limits per-class
Wilson intervals to $\pm 15$~pp.  Per-class CA(0.05) at $\sigma{=}0.25$
shows malicious-class CA meeting or exceeding the aggregate on stable
datasets: CIC malicious avg $82\%$/$73\%$ (LLaMA3/Qwen3) vs.\ aggregate
$76\%$/$68\%$; HIKARI malicious avg $72\%$/$100\%$; RT-IoT ($n_\text{aug}{=}4$)
malicious avg $71\%$/$63\%$ with within-class variation ($18$--$100\%$)
driven by class-specific noise stability.
All LLM fine-tuning uses a single LoRA seed (42); training variance
across seeds was not estimated.  Differences below approximately 4--5~pp
should therefore be interpreted cautiously: the MC certification
measurement itself carries a per-class CI of $\pm 15$~pp at $N{=}200$,
and LoRA fine-tuning variance is an additional uncharacterised source
of uncertainty.  The 10-seed random-subspace analysis
(Section~\ref{sec:supp_baselines}) quantifies mask-selection variance
but does not substitute for training-seed variance.

\textbf{RT-IoT2022 sensitivity.}
The default failure ($n_\text{aug}{=}2$: CA $4\%$/$10\%$) is real.
The aug-4 ablation and iso-trained baselines both confirm
noise-stability---not DC-alignment design---as the operative variable;
traffic-aware certification cannot compensate for an unstable base model.

\textbf{Data quality and scope.}
CIC-IDS-2018 has $33.3\%$ train-test overlap~\cite{engelen2021troubleshooting};
a deduplication audit on our 160-sample evaluation subset finds $22$ duplicate
feature vectors ($13.8\%$); their removal shifts CA by at most $1.6$~pp
(LLaMA3: $76.2\%\!\to\!76.1\%$; Qwen3: $67.5\%\!\to\!65.9\%$), so reported values
are not materially inflated.  HIKARI-2021 has $2$ duplicates ($1.7\%$, $\Delta\leq0.5$~pp);
RT-IoT2022 has no duplicate feature vectors.  NSL-KDD and UNSW-NB15 are excluded because
their categorical encodings create a perturbation-budget mismatch;
extending the DC/IC/UC taxonomy to handle encodings is future work.
Certification requires $N{=}200$ forward passes per query, ruling out
real-time deployment; offline certification and bound distillation
are left to future work.
Extension to consistency-regularised certified training
(e.g., MACER~\cite{zhai2020macer}, SmoothAdv~\cite{salman2019provably})
at LLM scale remains open: the per-sample adversarial step required
by these methods is prohibitively expensive for 8B-parameter models.

\section{Conclusion}
\label{sec:conclusion}

TA-RS shows that certification for LLM-based intrusion detection is not
only a matter of applying randomized smoothing to a trained classifier.
The certificate is meaningful only when the smoothing distribution
matches the attacker's directly controllable traffic features and the
LLM remains stable under noise in that subspace.  TA-RS operationalizes
this view by smoothing over $\DC$ and certifying the resulting smoothed
classifier in the same subspace.

The experiments quantify both conditions.  Noise-augmented fine-tuning
raises CA from 14--57\% (clean-trained) to 68--100\% on CIC/HIKARI;
traffic-aware smoothing eliminates abstention on non-controllable noise
directions (68\%$\to$8\% on CIC/LLaMA3), adding up to 72~pp over
isotropic RS applied to the same DC-trained model; the residual advantage
over a fairly iso-trained baseline is 4--19~pp on CIC-IDS-2018.  At the L$_\infty$-equivalent threshold
$R_\infty{=}\varepsilon\sqrt{|\DC|}$ ($\varepsilon{=}0.05$),
55--100\% of CIC/HIKARI samples are certified, with median radii
$1.8$--$5{\times}$ above $R_\infty$ (across $\sigma{=}0.25$--$1.00$).  The RT-IoT2022 boundary
condition---where the same default recipe fails---responds
directly to additional noise augmentation ($n_\text{aug}{=}4$:
CA $4\%/10\% \to 76\%/69\%$), confirming that noise-stability,
not DC-alignment design, is the variable that governs the hard cases.

The cross-domain boundary analysis (Table~\ref{tab:cross_domain}) shows
that structural executability does not substitute for dataset-specific
fine-tuning: zero-shot transfer achieves near-chance certified accuracy,
confirming that LLM pretraining knowledge does not transfer to new traffic
distributions without fine-tuning.  Limitations and future directions are
discussed in Section~\ref{sec:limitations}.

\section*{CRediT authorship contribution statement}
\textbf{Zhenpeng Li:} Conceptualization, Methodology, Software, Formal analysis,
Investigation, Data curation, Writing -- original draft, Writing -- review \& editing,
Visualization.

\section*{Declaration of competing interest}
The author declares that he has no known competing financial interests or personal
relationships that could have appeared to influence the work reported in this paper.

\section*{Data availability}
CIC-IDS-2018, HIKARI-2021, and RT-IoT2022 are publicly available benchmark datasets.
Feature annotations, certification code, and per-sample predictions will be released
upon acceptance (DOI to be assigned).

\section*{Declaration of Generative AI and AI-assisted technologies in the writing process}

During the preparation of this work the author used Claude (Anthropic) to assist with manuscript organisation, language editing, and drafting support. The author reviewed and edited all content as needed and takes full responsibility for the content of the publication.

\bibliographystyle{elsarticle-num}
\bibliography{refs}

\begin{thebibliography}{10}
\expandafter\ifx\csname url\endcsname\relax
  \def\url#1{\texttt{#1}}\fi
\expandafter\ifx\csname urlprefix\endcsname\relax\def\urlprefix{URL }\fi
\expandafter\ifx\csname href\endcsname\relax
  \def\href#1#2{#2} \def\path#1{#1}\fi

\bibitem{companion_a1}
Z.~Li, \href{https://arxiv.org/abs/2607.07739}{Controllability-aware adversarial examples against llm-based network traffic classifiers} (2026).
\newblock \href {http://arxiv.org/abs/2607.07739} {\path{arXiv:2607.07739}}.
\newline\urlprefix\url{https://arxiv.org/abs/2607.07739}

\bibitem{madry2018towards}
A.~Madry, A.~Makelov, L.~Schmidt, D.~Tsipras, A.~Vladu, \href{https://openreview.net/forum?id=rJzIBfZAb}{Towards deep learning models resistant to adversarial attacks}, in: Proceedings of the International Conference on Learning Representations, 2018.
\newline\urlprefix\url{https://openreview.net/forum?id=rJzIBfZAb}

\bibitem{goodfellow2015explaining}
I.~J. Goodfellow, J.~Shlens, C.~Szegedy, \href{https://arxiv.org/abs/1412.6572}{Explaining and harnessing adversarial examples}, in: Proceedings of the International Conference on Learning Representations, 2015.
\newline\urlprefix\url{https://arxiv.org/abs/1412.6572}

\bibitem{cohen2019certified}
J.~M. Cohen, E.~Rosenfeld, J.~Z. Kolter, \href{https://api.semanticscholar.org/CorpusID:59842968}{Certified adversarial robustness via randomized smoothing}, ArXiv abs/1902.02918 (2019).
\newline\urlprefix\url{https://api.semanticscholar.org/CorpusID:59842968}

\bibitem{katz2017reluplex}
G.~Katz, C.~W. Barrett, D.~L. Dill, K.~D. Julian, M.~J. Kochenderfer, \href{https://api.semanticscholar.org/CorpusID:516928}{Reluplex: An efficient smt solver for verifying deep neural networks}, ArXiv abs/1702.01135 (2017).
\newline\urlprefix\url{https://api.semanticscholar.org/CorpusID:516928}

\bibitem{tjeng2019evaluating}
V.~Tjeng, K.~Y. Xiao, R.~Tedrake, \href{https://openreview.net/forum?id=HyGIdiRqtm}{Evaluating robustness of neural networks with mixed integer programming}, in: International Conference on Learning Representations, 2019.
\newline\urlprefix\url{https://openreview.net/forum?id=HyGIdiRqtm}

\bibitem{zhang2018efficient}
H.~Zhang, T.-W. Weng, P.-Y. Chen, C.-J. Hsieh, L.~Daniel, Efficient neural network robustness certification with general activation functions, in: Proceedings of the 32nd International Conference on Neural Information Processing Systems, NIPS'18, Curran Associates Inc., Red Hook, NY, USA, 2018, p. 4944–4953.

\bibitem{lecuyer2019certified}
M.~L{\'e}cuyer, V.~Atlidakis, R.~Geambasu, D.~J. Hsu, S.~S. Jana, \href{https://api.semanticscholar.org/CorpusID:49431481}{Certified robustness to adversarial examples with differential privacy}, 2019 IEEE Symposium on Security and Privacy (SP) (2018) 656--672.
\newline\urlprefix\url{https://api.semanticscholar.org/CorpusID:49431481}

\bibitem{yang2020randomized}
G.~Yang, T.~Duan, J.~E. Hu, H.~Salman, I.~Razenshteyn, J.~Li, \href{https://proceedings.mlr.press/v119/yang20c.html}{Randomized smoothing of all shapes and sizes}, in: H.~D. III, A.~Singh (Eds.), Proceedings of the 37th International Conference on Machine Learning, Vol. 119 of Proceedings of Machine Learning Research, PMLR, 2020, pp. 10693--10705.
\newline\urlprefix\url{https://proceedings.mlr.press/v119/yang20c.html}

\bibitem{salman2019provably}
H.~Salman, J.~Li, I.~Razenshteyn, P.~Zhang, H.~Zhang, S.~Bubeck, G.~Yang, \href{https://proceedings.neurips.cc/paper_files/paper/2019/file/3a24b25a7b092a252166a1641ae953e7-Paper.pdf}{Provably robust deep learning via adversarially trained smoothed classifiers}, in: H.~Wallach, H.~Larochelle, A.~Beygelzimer, F.~d\textquotesingle Alch\'{e}-Buc, E.~Fox, R.~Garnett (Eds.), Advances in Neural Information Processing Systems, Vol.~32, Curran Associates, Inc., 2019.
\newline\urlprefix\url{https://proceedings.neurips.cc/paper_files/paper/2019/file/3a24b25a7b092a252166a1641ae953e7-Paper.pdf}

\bibitem{kumar2020certifiedLimits}
A.~Kumar, A.~Levine, T.~Goldstein, S.~Feizi, \href{https://proceedings.mlr.press/v119/kumar20b.html}{Curse of dimensionality on randomized smoothing for certifiable robustness}, in: H.~D. III, A.~Singh (Eds.), Proceedings of the 37th International Conference on Machine Learning, Vol. 119 of Proceedings of Machine Learning Research, PMLR, 2020, pp. 5458--5467.
\newline\urlprefix\url{https://proceedings.mlr.press/v119/kumar20b.html}

\bibitem{levine2020derandomized}
A.~Levine, S.~Feizi, (de)randomized smoothing for certifiable defense against patch attacks, in: Proceedings of the 34th International Conference on Neural Information Processing Systems, NIPS '20, Curran Associates Inc., Red Hook, NY, USA, 2020.

\bibitem{anisoRS2022}
H.~Hong, Y.~Hong, \href{https://arxiv.org/abs/2207.05327}{Certified adversarial robustness via anisotropic randomized smoothing} (2022).
\newblock \href {http://arxiv.org/abs/2207.05327} {\path{arXiv:2207.05327}}.
\newline\urlprefix\url{https://arxiv.org/abs/2207.05327}

\bibitem{corona2013adversarial}
I.~Corona, G.~Giacinto, F.~Roli, \href{https://www.sciencedirect.com/science/article/pii/S0020025513002119}{Adversarial attacks against intrusion detection systems: Taxonomy, solutions and open issues}, Information Sciences 239 (2013) 201--225.
\newblock \href {https://doi.org/https://doi.org/10.1016/j.ins.2013.03.022} {\path{doi:https://doi.org/10.1016/j.ins.2013.03.022}}.
\newline\urlprefix\url{https://www.sciencedirect.com/science/article/pii/S0020025513002119}

\bibitem{biggio2013evasion}
B.~Biggio, I.~Corona, D.~Maiorca, B.~Nelson, N.~{\v{S}}rndi{\'{c}}, P.~Laskov, G.~Giacinto, F.~Roli, Evasion attacks against machine learning at test time, in: H.~Blockeel, K.~Kersting, S.~Nijssen, F.~{\v{Z}}elezn{\'y} (Eds.), Machine Learning and Knowledge Discovery in Databases, Springer Berlin Heidelberg, Berlin, Heidelberg, 2013, pp. 387--402.

\bibitem{yang2018adversarial}
K.~Yang, J.~Liu, C.~Zhang, Y.~Fang, Adversarial examples against the deep learning based network intrusion detection systems, 2018, pp. 559--564.
\newblock \href {https://doi.org/10.1109/MILCOM.2018.8599759} {\path{doi:10.1109/MILCOM.2018.8599759}}.

\bibitem{han2021evaluating}
D.~Han, Z.~Wang, Y.~Zhong, W.~Chen, J.~Yang, S.~Lu, X.~Shi, X.~Yin, Evaluating and improving adversarial robustness of machine learning-based network intrusion detectors, IEEE Journal on Selected Areas in Communications 39~(8) (2021) 2632--2647.
\newblock \href {https://doi.org/10.1109/JSAC.2021.3087242} {\path{doi:10.1109/JSAC.2021.3087242}}.

\bibitem{apruzzese2023role}
G.~Apruzzese, P.~Laskov, E.~Montes~de Oca, W.~Mallouli, L.~Brdalo~Rapa, A.~V. Grammatopoulos, F.~Di~Franco, The role of machine learning in cybersecurity, Digital Threats: Research and Practice 4~(1) (2023).
\newblock \href {https://doi.org/10.1145/3545574} {\path{doi:10.1145/3545574}}.

\bibitem{pierazzi2020intriguing}
J.~Cortellazzi, E.~Quiring, D.~Arp, F.~Pendlebury, F.~Pierazzi, L.~Cavallaro, \href{https://doi.org/10.1145/3742895}{Intriguing properties of adversarial ml attacks in the problem space [extended version]}, ACM Trans. Priv. Secur. 28~(4) (Sep. 2025).
\newblock \href {https://doi.org/10.1145/3742895} {\path{doi:10.1145/3742895}}.
\newline\urlprefix\url{https://doi.org/10.1145/3742895}

\bibitem{logllm2024}
V.-H. Le, H.~Zhang, Log parsing: How far can chatgpt go?, in: 2023 38th IEEE/ACM International Conference on Automated Software Engineering (ASE), 2023, pp. 1699--1704.
\newblock \href {https://doi.org/10.1109/ASE56229.2023.00206} {\path{doi:10.1109/ASE56229.2023.00206}}.

\bibitem{vuln2024}
M.~Fu, C.~K. Tantithamthavorn, V.~Nguyen, T.~Le, Chatgpt for vulnerability detection, classification, and repair: How far are we?, in: 2023 30th Asia-Pacific Software Engineering Conference (APSEC), 2023, pp. 632--636.
\newblock \href {https://doi.org/10.1109/APSEC60848.2023.00085} {\path{doi:10.1109/APSEC60848.2023.00085}}.

\bibitem{threatintel2024}
H.~Ji, J.~Yang, L.~Chai, C.~Wei, L.~Yang, Y.~Duan, Y.~Wang, T.~Sun, H.~Guo, T.~Li, C.~Ren, Z.~Li, \href{https://api.semanticscholar.org/CorpusID:269605791}{Sevenllm: Benchmarking, eliciting, and enhancing abilities of large language models in cyber threat intelligence}, ArXiv abs/2405.03446 (2024).
\newline\urlprefix\url{https://api.semanticscholar.org/CorpusID:269605791}

\bibitem{sharafaldin2018cicids}
I.~Sharafaldin, A.~H. Lashkari, A.~A. Ghorbani, \href{https://api.semanticscholar.org/CorpusID:4707749}{Toward generating a new intrusion detection dataset and intrusion traffic characterization}, in: International Conference on Information Systems Security and Privacy, 2018.
\newline\urlprefix\url{https://api.semanticscholar.org/CorpusID:4707749}

\bibitem{neto2023ciciot2023}
E.~C.~P. Neto, S.~Dadkhah, R.~Ferreira, A.~Zohourian, R.~Lu, A.~A. Ghorbani, \href{https://www.mdpi.com/1424-8220/23/13/5941}{Ciciot2023: A real-time dataset and benchmark for large-scale attacks in iot environment}, Sensors 23~(13) (2023).
\newblock \href {https://doi.org/10.3390/s23135941} {\path{doi:10.3390/s23135941}}.
\newline\urlprefix\url{https://www.mdpi.com/1424-8220/23/13/5941}

\bibitem{qwen3}
A.~Yang, A.~Li, B.~Yang, B.~Zhang, B.~Hui, B.~Zheng, B.~Yu, C.~Gao, C.~Huang, C.~Lv, C.~Zheng, D.~Liu, F.~Zhou, F.~Huang, F.~Hu, H.~Ge, H.~Wei, H.~Lin, J.~Tang, J.~Yang, J.~Tu, J.~Zhang, J.~Yang, J.~Yang, J.~Zhou, J.~Zhou, J.~Lin, K.~Dang, K.~Bao, K.~Yang, L.~Yu, L.~Deng, M.~Li, M.~Xue, M.~Li, P.~Zhang, P.~Wang, Q.~Zhu, R.~Men, R.~Gao, S.~Liu, S.~Luo, T.~Li, T.~Tang, W.~Yin, X.~Ren, X.~Wang, X.~Zhang, X.~Ren, Y.~Fan, Y.~Su, Y.~Zhang, Y.~Zhang, Y.~Wan, Y.~Liu, Z.~Wang, Z.~Cui, Z.~Zhang, Z.~Zhou, Z.~Qiu, \href{https://arxiv.org/abs/2505.09388}{Qwen3 technical report} (2025).
\newblock \href {http://arxiv.org/abs/2505.09388} {\path{arXiv:2505.09388}}.
\newline\urlprefix\url{https://arxiv.org/abs/2505.09388}

\bibitem{llama3}
A.~Dubey, A.~Jauhri, A.~Pandey, A.~Kadian, A.~Al-Dahle, A.~Letman, A.~Mathur, A.~Schelten, A.~Yang, A.~Fan, A.~Goyal, A.~S. Hartshorn, A.~Yang, A.~Mitra, A.~Sravankumar, A.~Korenev, A.~Hinsvark, A.~Rao, A.~Zhang, A.~Rodriguez, A.~Gregerson, A.~Spataru, B.~Rozi{\`e}re, B.~M. Biron, B.~Tang, B.~Chern, C.~lotte Caucheteux, C.~Nayak, C.~Bi, C.~Marra, C.~McConnell, C.~Keller, C.~Touret, C.~Wu, C.~Wong, C.~C. Ferrer, C.~Nikolaidis, D.~Allonsius, D.~J. Song, D.~Pintz, D.~Livshits, D.~Esiobu, D.~Choudhary, D.~Mahajan, D.~Garcia-Olano, D.~Perino, D.~Hupkes, E.~Lakomkin, E.~A. AlBadawy, E.~I. Lobanova, E.~Dinan, E.~M. Smith, F.~Radenovic, F.~Zhang, G.~Synnaeve, G.~Lee, G.~L. Anderson, G.~Nail, G.~Mialon, G.~Pang, G.~Cu-curell, H.~Nguyen, H.~Korevaar, H.~Xu, H.~Touvron, I.~Zarov, I.~A. Ibarra, I.~M. Kloumann, I.~Misra, I.~Evtimov, J.~Copet, J.~Lee, J.~Geffert, J.~Vranes, J.~Park, J.~Mahadeokar, J.~Shah, J.~van~der Linde, J.~Billock, J.~Hong, J.~Lee, J.~Fu, J.~Chi, J.~Huang, J.~Liu, J.~Wang, J.~Yu, J.~Bitton, J.~Spisak, J.~Park, J.~Rocca, J.~Johnstun, J.~Saxe, J.-Q. Jia, K.~V. Alwala, K.~Upasani, K.~Plawiak, K.~Li, K.~Heafield, K.~R. Stone, K.~El-Arini, K.~Iyer, K.~Malik, K.~ley Chiu, K.~Bhalla, L.~Rantala-Yeary, L.~van~der Maaten, L.~Chen, L.~Tan, L.~Jenkins, L.~Martin, L.~Madaan, L.~Malo, L.~Blecher, L.~Landzaat, L.~de~Oliveira, M.~Muzzi, M.~hesh Pasupuleti, M.~Singh, M.~Paluri, M.~Kardas, M.~Oldham, M.~Rita, M.~Pavlova, M.~H.~M. Kambadur, M.~Lewis, M.~Si, M.~K. Singh, M.~Hassan, N.~Goyal, N.~Torabi, N.~Bash-lykov, N.~Bogoychev, N.~S. Chatterji, O.~Duchenne, O.~cCelebi, P.~Alrassy, P.~Zhang, P.~Li, P.~Vasi{\'c}, P.~Weng, P.~Bhargava, P.~Dubal, P.~Krishnan, P.~S. Koura, P.~Xu, Q.~He, Q.~Dong, R.~S.~M. Srinivasan, R.~Ganapathy, R.~Calderer, R.~S. Cabral, R.~Stojnic, R.~Raileanu, R.~Girdhar, R.~Patel, R.~Sauvestre, R.~nie Polidoro, R.~Sumbaly, R.~Taylor, R.~Silva, R.~Hou, R.~Wang, S.~Hosseini, S.~hana Chennabasappa, S.~Singh, S.~Bell, S.~S. Kim, S.~Edunov, S.~Nie, S.~Narang, S.~C. Raparthy, S.~Shen, S.~Wan, S.~Bhosale, S.~Zhang, S.~Vandenhende, S.~Batra, S.~Whit-man, S.~Sootla, S.~Collot, S.~Gururangan, S.~Borodinsky, T.~Herman, T.~Fowler, T.~Sheasha, T.~Georgiou, T.~Scialom, T.~Speckbacher, T.~Mihaylov, T.~Xiao, U.~Karn, V.~Goswami, V.~Gupta, V.~Ramanathan, V.~Kerkez, V.~Gonguet, V.~Do, V.~Vogeti, V.~Petrovic, W.~Chu, W.~Xiong, W.~Fu, W.~ney Meers, X.~Martinet, X.~Wang, X.~E. Tan, X.~Xie, X.~Jia, X.~Wang, Y.~Goldschlag, Y.~Gaur, Y.~Babaei, Y.~Wen, Y.~Song, Y.~Zhang, Y.~Li, Y.~Mao, Z.~D. Coudert, Z.~Yan, Z.~Chen, Z.~Papakipos, A.~K. Singh, A.~Grattafiori, A.~Jain, A.~Kelsey, A.~Shajnfeld, A.~Gangidi, A.~Victoria, A.~Goldstand, A.~Menon, A.~Sharma, A.~Boesenberg, A.~Vaughan, A.~Baevski, A.~Fein-stein, A.~Kallet, A.~Sangani, A.~Yunus, A.~Lupu, A.~Alvarado, A.~Caples, A.~Gu, A.~Ho, A.~Poulton, A.~Ryan, A.~Ramchandani, A.~Franco, A.~Saraf, A.~Chowdhury, A.~Gabriel, A.~R. Bharambe, A.~Eisenman, A.~Yazdan, B.~James, B.~Maurer, B.~Leonhardi, P.-Y.~B. Huang, B.~Loyd, B.~de~Paola, B.~Paranjape, B.~Liu, B.~Wu, B.~Ni, B.~Hancock, B.~Wasti, B.~Spence, B.~Stojkovic, B.~Gamido, B.~Montalvo, C.~Parker, C.~Burton, C.~Mejia, C.~Wang, C.~Kim, C.~Zhou, C.~Hu, C.-H. Chu, C.~Cai, C.~Tindal, C.~Feichtenhofer, D.~Civin, D.~Beaty, D.~Kreymer, S.-W. Li, D.~Wyatt, D.~Adkins, D.~Xu, D.~Testuggine, D.~David, D.~Parikh, D.~Liskovich, D.~Foss, D.~Wang, D.~Le, D.~Holland, E.~Dowling, E.~Jamil, E.~Montgomery, E.~Presani, E.~Hahn, E.~Wood, E.~Brinkman, E.~Arcaute, E.~Dunbar, E.~Smoth-ers, F.~Sun, F.~Kreuk, F.~Tian, F.~Ozgenel, F.~Caggioni, F.~P. Guzm{\'a}n, F.~J. Kanayet, F.~Seide, G.~M. Florez, G.~Schwarz, G.~Badeer, G.~Swee, G.~Halpern, G.~Thattai, G.~Herman, G.~Sizov, G.~Zhang, G.~Lakshminarayanan, H.~Shojanazeri, H.~Zou, H.~Wang, H.~Zha, H.~Habeeb, H.~Rudolph, H.~Suk, H.~Aspegren, H.~Goldman, I.~Molybog, I.~Tufanov, I.-E. Veliche, I.~Gat, J.~Weissman, J.~Geboski, J.~Kohli, J.~Asher, J.-B. Gaya, J.~Marcus, J.~Tang, J.~Chan, J.~Zhen, J.~Reizenstein, J.~Teboul, J.~Zhong, J.~Jin, J.~Yang, J.~Cummings, J.~Carvill, J.~Shepard, J.~McPhie, J.~Torres, J.~Ginsburg, J.~Wang, K.~Wu, U.~KamHou, K.~Saxena, K.~Prasad, K.~Khandelwal, K.~oun Zand, K.~Matosich, K.~Veeraraghavan, K.~Michelena, K.~Li, K.~Huang, K.~Chawla, K.~Lakhotia, K.~Huang, L.~Chen, L.~Garg, A.~Lavender, L.~Silva, L.~Bell, L.~Zhang, L.~Guo, L.~Yu, L.~Moshkovich, L.~Wehrstedt, M.~Khabsa, M.~Avalani, M.~Bhatt, M.~Tsimpoukelli, M.~Mankus, M.~Hasson, M.~Lennie, M.~Reso, M.~Groshev, M.~Naumov, M.~Lathi, M.~Keneally, M.~L. Seltzer, M.~Valko, M.~Re-strepo, M.~Patel, M.~Vyatskov, M.~Samvelyan, M.~Clark, M.~Macey, M.~Wang, M.~J. Hermoso, M.~Metanat, M.~Rastegari, M.~ish Bansal, N.~Santhanam, N.~Parks, N.~White, N.~ata Bawa, N.~Singhal, N.~Egebo, N.~Usunier, N.~P. Laptev, N.~Dong, N.~Zhang, N.~Cheng, O.~Chernoguz, O.~Hart, O.~Salpekar, O.~Kalinli, P.~Kent, P.~Parekh, P.~Saab, P.~Balaji, P.~dro Rittner, P.~Bontrager, P.~Roux, P.~Doll{\'a}r, P.~Zvyagina, P.~Ratanchandani, P.~Yuvraj, Q.~Liang, R.~Alao, R.~Rodriguez, R.~Ayub, R.~Murthy, R.~Nayani, R.~Mitra, R.~Li, R.~Hogan, R.~Battey, R.~Wang, R.~han Maheswari, R.~Howes, R.~Rinott, S.~J. Bondu, S.~Datta, S.~Chugh, S.~Hunt, S.~Dhillon, S.~Y. Sidorov, S.~Pan, S.~Verma, S.~Yamamoto, S.~Ramaswamy, S.~Lindsay, S.~Feng, S.~Lin, S.~Zha, S.~Shankar, S.~Zhang, S.~Wang, S.~Agarwal, S.~Sajuyigbe, S.~Chintala, S.~Max, S.~Chen, S.~Kehoe, S.~Satterfield, S.~Govindaprasad, S.~K. Gupta, S.-B. Cho, S.~Virk, S.~Subramanian, S.~Choudhury, S.~Goldman, T.~Remez, T.~Glaser, T.~Best, T.~Kohler, T.~Robinson, T.~Li, T.~Zhang, T.~Matthews, T.~Chou, T.~Shaked, V.~Vontimitta, V.~O. Ajayi, V.~Montanez, V.~Mohan, V.~Kumar, V.~Mangla, V.~Ionescu, V.~A. Poenaru, V.~T. Mihailescu, V.~Ivanov, W.~Li, W.~Wang, W.~Jiang, W.~Bouaziz, W.~Constable, X.~Tang, X.~Wang, X.~Wu, X.~Wang, X.~Xia, X.~Wu, X.~Gao, Y.~Chen, Y.~Hu, Y.~Jia, Y.~Qi, Y.~Li, Y.~Zhang, Y.~Zhang, Y.~Adi, Y.~Nam, Y.~Wang, Y.~Hao, Y.~Qian, Y.~He, Z.~Rait, Z.~DeVito, Z.~Rosnbrick, Z.~Wen, Z.~Yang, Z.~Zhao, \href{https://api.semanticscholar.org/CorpusID:271571434}{The llama 3 herd of models}, 2024.
\newline\urlprefix\url{https://api.semanticscholar.org/CorpusID:271571434}

\bibitem{hu2022lora}
E.~J. Hu, Y.~Shen, P.~Wallis, Z.~Allen-Zhu, Y.~Li, S.~Wang, L.~Wang, W.~Chen, \href{https://openreview.net/forum?id=nZeVKeeFYf9}{{LoRA}: Low-rank adaptation of large language models}, International Conference on Learning Representations (2022).
\newline\urlprefix\url{https://openreview.net/forum?id=nZeVKeeFYf9}

\bibitem{carlini2017towards}
N.~Carlini, D.~Wagner, Towards evaluating the robustness of neural networks, in: 2017 IEEE Symposium on Security and Privacy, IEEE, 2017, pp. 39--57.
\newblock \href {https://doi.org/10.1109/SP.2017.49} {\path{doi:10.1109/SP.2017.49}}.

\bibitem{chen2016xgboost}
T.~Chen, C.~Guestrin, {XGBoost}: A scalable tree boosting system, in: Proceedings of the 22nd ACM SIGKDD International Conference on Knowledge Discovery and Data Mining, Association for Computing Machinery, New York, NY, USA, 2016, pp. 785--794.
\newblock \href {https://doi.org/10.1145/2939672.2939785} {\path{doi:10.1145/2939672.2939785}}.

\bibitem{ke2017lightgbm}
G.~Ke, Q.~Meng, T.~Finley, T.~Wang, W.~Chen, W.~Ma, Q.~Ye, T.-Y. Liu, {LightGBM}: A highly efficient gradient boosting decision tree, in: Advances in Neural Information Processing Systems, Vol.~30, 2017, pp. 3146--3154.

\bibitem{engelen2021troubleshooting}
G.~Engelen, V.~Rimmer, W.~Joosen, Troubleshooting an intrusion detection dataset: the cicids2017 case study, in: 2021 IEEE Security and Privacy Workshops (SPW), 2021, pp. 7--12.
\newblock \href {https://doi.org/10.1109/SPW53761.2021.00009} {\path{doi:10.1109/SPW53761.2021.00009}}.

\bibitem{zhai2020macer}
R.~Zhai, C.~Dan, D.~He, H.~Zhang, B.~Gong, P.~Ravikumar, C.-J. Hsieh, L.~Wang, \href{https://openreview.net/forum?id=rJx1Na4Fwr}{{MACER}: Attack-free and scalable robust training via maximizing certified radius}, in: Proceedings of the International Conference on Learning Representations, 2020.
\newline\urlprefix\url{https://openreview.net/forum?id=rJx1Na4Fwr}

\end{thebibliography}
\end{document}